\newcommand{\Ex}{\textrm{E}}
\newcommand{\iid}{\stackrel {{\rm iid}}{\sim}}
\newcommand{\ind}{\stackrel {{\rm ind}}{\sim}}
\newcommand{\sumas}{\sum^N_{i=1}}
\newcommand{\ii}{i=1,\ldots,N}
\newcommand{\x}{\mathbf{x}}
\newcommand{\yp}{\mathbf{y}}
\newcommand{\Y}{\mathbf{Y}}
\newcommand{\D}{\mathbf{D}}
\newcommand{\z}{\mathbf{z}}
\newcommand{\Z}{\mathbf{Z}}
\newcommand{\A}{\textrm{A}}
\newcommand{\Ae}{\mathbf{A}}
\newcommand{\X}{\mathbf{X}}
\newcommand{\br}{\mathbf{r}}
\newcommand{\s}{\mathbf{s}}
\newcommand{\y}{\mathbf{y}}
\newcommand{\M}{\mathbf{M}}
\newcommand{\bq}{\mathbf{q}}
\newcommand{\bV}{\mathbf{V}}
\newcommand{\W}{\mathbf{W}}
\newcommand{\T}{\mathbf{T}}
\newcommand{\bS}{\mathbf{S}}
\newcommand{\bs}{\mathbf{s}}
\newcommand{\balpha}{\mbox{${\bm \alpha}$}}
\newcommand{\bmu}{\mbox{${\bm \mu}$}}
\newcommand{\bphi}{\mbox{${\bm \phi}$}}
\newcommand{\bSigma}{\mbox{${\bm \Sigma}$}}
\newcommand{\bepsilon}{\mbox{${\bm \epsilon}$}}
\newcommand{\bLambda}{\mbox{${\bm \Lambda}$}}
\newcommand{\bbeta}{\mbox{${\bm \beta}$}}
\newcommand{\btheta}{\mbox{${\bm \theta}$}}
\newcommand{\bDelta}{\mbox{${\bm \Delta}$}}
\newcommand{\bdelta}{\mbox{${\bm \delta}$}}
\newcommand{\bPsi}{\mbox{${\bm \Psi}$}}
\newcommand{\be}{\mathbf{b}}
\newcommand{\bOmega}{\mbox{${\bm \Omega}$}}
\newcommand{\qhatk}{\widehat{Q}^{(k)}}
\newcommand{\ui}{\widehat{u}_i}
\newcommand{\uki}{\widehat{u}_i^{(k)}}
\newcommand{\umr}{\mathbf{1}_r}
\newtheorem{lema}{Lemma}
\newtheorem{proposition}{Proposition}
\newtheorem{definition}{Definition}
\begin{document}
\title{Canonical fundamental skew-t linear mixed models}
%\title{Flexible Mixed Effect Models}
\author{
\small Fernanda Lang Schumacher\textsuperscript{a}
\and \small  Larissa Avila Matos\textsuperscript{a}\thanks{Corresponding author.  Address for correspondence: Departamento de Estat\'{\i}stica,  Rua Sérgio Buarque de Holanda, 651. CEP 13083-859. Campinas, SP, Brazil.  
e-mail adresses: \texttt{fernandalschumacher@gmail.com} (F. L. Schumacher),  
\texttt{larissam@ime.unicamp.br} (L. A. Matos),
\texttt{celsoromulo@ufam.edu.br} (C. R. B. Cabral)} 
\and \small  Celso R\^omulo Barbosa Cabral\textsuperscript{b}   \\
{ \em \small \textsuperscript{a} Departamento de Estat\'{\i}stica, Universidade Estadual de Campinas, Brazil}\\ 
{ \em \small\textsuperscript{b}Departamento de Estat\'{\i}stica, Universidade Federal do Amazonas, Brazil}  \vspace*{0.1cm}
%  \author{The Corporation}
}
\date{}
\maketitle
\begin{abstract}
\noindent In clinical trials, studies often present longitudinal data or clustered data. %, which are frequently affected by missing data. 
These studies are commonly analyzed using linear mixed models (LMMs), usually considering Gaussian assumptions for random effect and error terms. Recently, several proposals extended the restrictive assumptions from traditional LMM by more flexible ones that can accommodate skewness and heavy-tails and consequently are more robust to outliers. This work proposes a canonical fundamental skew-t linear mixed model (ST-LMM), that allows for asymmetric and heavy-tailed random effects and errors and includes several important cases as special cases, which are presented and considered for model selection. For this robust and flexible model, we present an efficient EM-type algorithm for parameter estimation via maximum likelihood, implemented in a closed form by exploring the hierarchical representation of the ST-LMM. In addition, the estimation of standard errors and random effects is discussed. The methodology is illustrated through an application to schizophrenia data and some simulation studies.
%Please write an abstract here.
\vspace*{0.5cm}\\
\noindent {\bf Keywords:} Heavy-tails; Longitudinal data; Robust models; Skewed distributions
\end{abstract}

\section{Introduction}

Linear mixed models (LMM) are commonly used to model data that present a natural hierarchical structure because they flexibly model the within-subject correlation \citep{PinheiroBates2000}. This kind of structure appears when a variable of interest is repeatedly measured for several subjects (or cluster units, in general), which is frequently the case in clinical trials.

For mathematical convenience, it is usually assumed that both random effect and error follow Gaussian distributions. Nevertheless, these restrictive assumptions may result in a lack of robustness against departures from the normal distribution and invalid statistical inferences, especially when the data show heavy tails and skewness \citep[see, e.g.,][]{drikvandi2017diagnosing,drikvandi2019nonlinear}.
Several approaches have been considered in the literature to replace the normal assumptions of LMM with more flexible distributions. For example, \cite{Pinheiro01} proposed a multivariate t linear mixed model (T-LMM), and \cite{rosa2003robust} considered the thick-tailed class of normal/independent (NI) distributions from a Bayesian framework.

Accounting for skewness, \cite{ArellanoLachos2005} proposed a skew-normal linear mixed model (SN-LMM) based on the \emph{classic skew-normal} (SN) distribution introduced by \cite{azzalini96}, and  \cite{ho2010robust} proposed a skew-t linear mixed model (ST-LMM) based on the \emph{classic skew-t} (ST) distribution introduced by \cite{azzalini2003distributions}.
More generally, \cite{Lachos_Ghosh_Arellano_2010} proposed robust parametric modeling of LMM based on skew-normal/independent (SNI) distributions, where random effects follow an SNI distribution and within-subject errors follow a normal/independent (NI) distribution, and consequently observed responses follow an SNI distribution, and they define what they call the skew-normal/independent linear mixed model, and \cite{schumacher2021scale} extended the SNI-LMM by considering within-subject serial dependence and developing additional tools for model evaluation.

In general, SN and ST distributions extend the normal and Student-t distributions by introducing additional parameters regulating skewness. Some extensions and unifications of these distributions are carefully surveyed in works such as \cite{azzalini2005} and \cite{Arellano_Azzalini_2006}.  For further information, we refer to the book edited by \cite{Genton.Beyond} and the more recent one by \cite{Azzalini.Capitanio.SNBook}.  
The classic formulation of SN and ST distributions were used successfully in many other works, such as \cite{Pyne.etal.2009},  \cite{Lachos_Ghosh_Arellano_2010}, \cite{Lachos_Vilca_Bolfarine_Ghosh}, \cite{Cabral_Lachos_Madruga2012}, \cite{Cabral2012}, and \cite{Cabral.Lachos.Zeller.2014}. Furthermore, another popular version of these two skewed distributions was defined by \cite{Sahu_Dey_Marcia}, which will be called the SDB-SN and SDB-ST in this work. This proposal was applied in works such as \cite{lin2009maximum}, \cite{Lin2010}, and \cite{jara2008linear}.

Recently, \cite{Lee.McLachlan.2016} proposed finite mixtures of a generalization of the classic and SDB-ST distributions, called canonical fundamental skew-t distributions (CFUST), which are special cases of the fundamental skew distributions defined by \cite{Arellano_Genton_2005}.  Using simulation studies, \cite{Lee.McLachlan.2016} showed that the CFUST distribution outperforms the classic and SDB-ST distributions in mixture models. %In this work, we use the extension considered in \cite{Lee.McLachlan.2016} as  an  extension of the Student-t distribution.    
In this regard, this work aims to extend the classic ST-LMM by considering the CFUST distribution used in \cite{Lee.McLachlan.2016}, by developing ML estimation based on an EM-type algorithm. This formulation, which will be hereafter simply called ST distribution, enables a more flexible skewness structure at the cost of a higher number of parameters to be estimated.

The rest of this manuscript is organized as follows. Section \ref{sec:skewed} introduces the formulation of the skewed distributions that are considered in this work. Section \ref{sec:model} defines the ST-LMM and discusses its ML estimation via an EM-type algorithm. Section \ref{sec:sim} presents some simulation studies conducted to evaluate the empirical performance of the ST-LMM and the effect of initial values, as well as an illustrative study to exemplify the flexibility of the model. In Section \ref{sec:applic}, the methodology is applied to a schizophrenia data set. Finally, Section \ref{sec:final} discusses some final remarks.

\section{Skewed distributions}\label{sec:skewed}

Let $\X \sim \mathrm{N}_p(\bmu,\bSigma)$ denote a $p$-dimensional random vector following a normal distribution with mean vector $\bmu$ and covariance matrix $\bSigma$, and let $\phi_p(\cdot \mid \bmu,\bSigma)$ and $\Phi_p(\cdot \mid \bmu,\bSigma)$ be the respective density and cumulative distribution function. When $\bmu=\mathbf{0}_p$ and $\bSigma=\mathbf{I}_p$ (the null $p$-dimensional vector and  the $p \times p$ identity matrix, respectively), we simplify the notation to $\phi_p(\cdot)$ and $\Phi_p(\cdot)$, and when $p=1$, we use the notation $\phi(\cdot \mid \mu,\sigma^2)$ and $\Phi(\cdot \mid \mu,\sigma^2)$.

 Suppose $\X \sim \mathrm{N}_p(\bmu,\bSigma)$, then for a given Borel set $A$, we say that the distribution of $\Y=(\X \mid \X \in A) $ is  a \emph{truncated normal distribution on $A$}, denoted by $\Y \sim \mathrm{TN}_p(\bmu,\bSigma,A)$, whose density is given by 
$$
f_{\Y}(\y) = \frac{1}{P(\X \in A)} \phi_{p}(\y \mid \bmu,\bSigma) \mathbbm{1}_{A}(\y),
$$
where $\mathbbm{1}_{A}(\cdot)$ is the indicator function of $A$. As a particular case, consider $\bSigma = \mathbf{I}_p $ and $A = \{\x; \x > \bmu\}$, then all elements of the vector $\X$ are independent random variables and $P(\X>\bmu)=2^p$. 
Now, considering $\X \sim \mathrm{N}_p(\mathbf{0},\bSigma) $, we say that the distribution of $\Y=|\X|$ is a $p$-dimensional half-normal with scale matrix $\bSigma$, where $|\X|=(|X_1|,\ldots,|X_p|)^\top$.%, and $\X>\mathbf{0}$ such that $X_i>{0}$, $i=1,\ldots,p$. 

  First, we will define the version of the skew-normal distribution that will be used in this work. It is a special case of the  fundamental skew-normal distribution defined by \citet{Arellano_Genton_2005}. The presentation below is based on this work, and all the proofs can be found there. Some of them are reproduced below, and some others are skipped.  
\begin{definition} \label{def CFUSN}
	Let   $\X_0 \sim \textrm{N}_q(\mathbf{0},\mathbf{I}_q)$ and 
	$\X_1 \sim \textrm{N}_p(\bmu,\bOmega)$ be independent, where  $\bOmega$ is positive definite, and let $\bDelta$ be a $p \times q$ matrix.  We say that the distribution of 
$$
\Y =  \bDelta  |\X_0| +   \X_1
$$	
is skew-normal with location vector $\bmu$, shape matrix $\bDelta$, and  scale matrix $\bOmega$.   We use the notation $\Y \sim \textrm{SN}_{p,q}(\bmu,\bOmega,\bDelta)$. 
\end{definition}

Observe that when $\bDelta = \mathbf{0}$, then $\Y \sim \mathrm{N}_p(\bmu,\bOmega)$.  A trivial but relevant consequence  of this definition is that affine transformations of SN distributions are still SN, as stated in Proposition \ref{prop affine transf SMSN}.
\begin{proposition} \label{prop affine transf SMSN}
	Let  $\Ae$ be an $m \times p$ matrix with rank $m$,  $\be$ be a vector of length $m$, and $\Y \sim  SN_{p,q}(\bmu,\bOmega,\bDelta)$. Then,  $\Z= \Ae\Y+\be \sim  SN_{m,q}(\Ae\bmu  +\be,\Ae \bOmega \Ae',\Ae \bDelta)$. 
\end{proposition}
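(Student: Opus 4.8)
The plan is to reduce the claim to a direct application of Definition \ref{def CFUSN} through the stochastic representation. First I would take the representation guaranteed by the definition, $\Y = \bDelta|\X_0| + \X_1$, with $\X_0 \sim \textrm{N}_q(\mathbf{0},\mathbf{I}_q)$ and $\X_1 \sim \textrm{N}_p(\bmu,\bOmega)$ independent, and substitute into $\Z = \Ae\Y + \be$ to get
$$
\Z = \Ae\bDelta\,|\X_0| + (\Ae\X_1 + \be).
$$
The point is that the right-hand side already has exactly the shape required of an $SN_{m,q}$ vector: the "half-normal" coefficient is the $m\times q$ matrix $\Ae\bDelta$, the independent normal summand is $\Ae\X_1 + \be$, and $\X_0$ is untouched (so it is still $\textrm{N}_q(\mathbf{0},\mathbf{I}_q)$ and still independent of $\Ae\X_1 + \be$, because it was independent of $\X_1$).

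Then I would check the remaining distributional facts. By the standard behaviour of the multivariate normal under affine maps, $\Ae\X_1 + \be \sim \textrm{N}_m(\Ae\bmu + \be,\, \Ae\bOmega\Ae')$. The only step that genuinely uses the hypotheses is verifying that $\Ae\bOmega\Ae'$ is positive definite, so that it is an admissible scale matrix in the sense of Definition \ref{def CFUSN}: since $\bOmega$ is positive definite and $\Ae$ has full row rank $m$, for any nonzero $\bv\in\mathbb{R}^m$ we have $\Ae'\bv \neq \mathbf{0}$, whence $\bv'\Ae\bOmega\Ae'\bv = (\Ae'\bv)'\bOmega(\Ae'\bv) > 0$. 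With $\X_0$, the matrix $\Ae\bDelta$, and the normal vector $\Ae\X_1 + \be$ all satisfying the requirements, Definition \ref{def CFUSN} yields $\Z \sim SN_{m,q}(\Ae\bmu + \be,\, \Ae\bOmega\Ae',\, \Ae\bDelta)$, which is the assertion.

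I do not expect a real obstacle: the argument is bookkeeping on the stochastic representation, and the one place needing attention is precisely the positive-definiteness check above, which is why the rank-$m$ assumption on $\Ae$ is imposed — without it $\Ae\bOmega\Ae'$ could be merely positive semidefinite and the object would fall outside the class as defined. An alternative route would be to start from the moment generating (or characteristic) function of $\Y$ recorded in \citet{Arellano_Genton_2005} and match the mgf of $\Ae\Y + \be$ to that of the claimed $SN_{m,q}$ law, but the representation-based proof is shorter and more transparent, so that is the one I would write up.
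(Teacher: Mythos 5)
Your argument is correct and is exactly the route the paper intends: the paper presents this proposition as ``a trivial but relevant consequence'' of Definition \ref{def CFUSN} (deferring details to \citet{Arellano_Genton_2005}), i.e.\ apply $\Ae$ to the representation $\Y=\bDelta|\X_0|+\X_1$ and use closure of the normal family under affine maps, with the rank-$m$ hypothesis guaranteeing that $\Ae\bOmega\Ae'$ remains positive definite. Your write-up simply makes that bookkeeping, including the positive-definiteness check, explicit.
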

In particular, marginal distributions are also SN. Thus, if $\Y \sim \textrm{SN}_{p,q}(\bmu,\bOmega,\bDelta)$ and considering the partition 	
\begin{equation}\label{eqn partition SN}
\Y=(\Y_1^\top,\Y_2^\top)^\top,\,\,\,  \mbox{where} \,\,\,  \Y_1: p_1 \times 1 \,\,\,  \mbox{and} \,\,\,   \Y_2: p_2 \times 1, \,\,\, \mbox{with} \,\,\,  p_1+p_2=p,
\end{equation}
then, for $\Ae = (\mathbf{I}_{p_1}\,\,\, \mathbf{0}_{p_1 \times p_2})$, we have $\Y_1 = \Ae \Y$. Matrix $\Ae$ induces similar partitions on $\bmu$, $\bOmega$ and $\bDelta$, given by $\bmu=(\bmu_1^\top,\bmu_2^\top)^\top$, $\bDelta=(\bDelta_1^\top,\bDelta_2^\top)^\top$,  $\bOmega=(\bOmega_{ij})$, $i,j=1,2$, where  $\bmu_1:p_1 \times 1$, $\bOmega_{11}:p_1 \times p_1$ and $\bDelta_1:p_1 \times q$. By Proposition \ref{prop affine transf SMSN}, we have $\Y_1 \sim \textrm{SN}_{p_1,q}(\bmu_1,\bOmega_{11},\bDelta_{1})$. An analogous result is true for $\Y_2$.

\begin{proposition}
	Let $\Y \sim \textrm{SN}_{p,q}(\bmu,\bOmega,\bDelta)$. Then, the density of \ $\Y$ is given by
	\begin{equation} \label{eqn pdf CFUSN}
	f_\Y(\y)= 2^q \phi_p(\y\mid \bmu,\bSigma) \Phi_q(\bDelta^\top\bSigma^{-1}(\y-\bmu) \mid \mathbf{0},\bLambda),
	\end{equation}
	where 
	\begin{equation} \label{eqn siglambda}
	\bSigma = \bOmega+ \bDelta \bDelta^\top \, \, \,   \mbox{and}  \,\, \, \bLambda = \mathbf{I}_q - \bDelta^\top \bSigma^{-1} \bDelta.
	\end{equation} 
\end{proposition}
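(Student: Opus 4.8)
The plan is to derive the density by augmenting $\Y$ with the latent Gaussian vector $\X_0$ and then recognizing the resulting integral as a conditional probability for a jointly normal pair. First I would record that, conditionally on $\X_0 = \x_0$, Definition \ref{def CFUSN} gives $\Y \sim \mathrm{N}_p(\bmu + \bDelta|\x_0|, \bOmega)$, so that
\[
f_\Y(\y) = \int_{\mathbb{R}^q} \phi_p\big(\y - \bmu - \bDelta|\x_0| \mid \mathbf{0}, \bOmega\big)\, \phi_q(\x_0)\, d\x_0 .
\]
Splitting $\mathbb{R}^q$ into its $2^q$ orthants and substituting $t_i = \pm x_{0i}$ on each one, the integrand depends on $\x_0$ only through $|\x_0|$ and $\phi_q$ is even, so every orthant contributes the same quantity, yielding
\[
f_\Y(\y) = 2^q \int_{\mathbb{R}_+^q} \phi_p\big(\y - \bmu - \bDelta\bt \mid \mathbf{0}, \bOmega\big)\, \phi_q(\bt)\, d\bt .
\]

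Next I would interpret $\phi_p(\y - \bmu - \bDelta\bt \mid \mathbf{0}, \bOmega)\,\phi_q(\bt)$ as the joint density at $(\y,\bt)$ of a pair $(\Y^\top, \T^\top)^\top$ in which $\T \sim \mathrm{N}_q(\mathbf{0}, \mathbf{I}_q)$ and $\Y \mid \T = \bt \sim \mathrm{N}_p(\bmu + \bDelta\bt, \bOmega)$. Computing the first two moments shows this pair is jointly normal,
\[
\begin{pmatrix}\Y\\\T\end{pmatrix} \sim \mathrm{N}_{p+q}\!\left(\begin{pmatrix}\bmu\\\mathbf{0}\end{pmatrix},\ \begin{pmatrix}\bSigma & \bDelta\\\bDelta^\top & \mathbf{I}_q\end{pmatrix}\right),
\]
with $\bSigma = \bOmega + \bDelta\bDelta^\top$ exactly as in \eqref{eqn siglambda}; the Schur complement of the $\mathbf{I}_q$ block is $\bSigma - \bDelta\bDelta^\top = \bOmega$, which is positive definite, so the joint covariance is positive definite and all densities and conditional laws below are well defined. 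Hence the last integral equals $f_\Y(\y)\, P(\T > \mathbf{0} \mid \Y = \y)$, and since marginally $\Y \sim \mathrm{N}_p(\bmu, \bSigma)$ we obtain $f_\Y(\y) = 2^q \phi_p(\y \mid \bmu, \bSigma)\, P(\T > \mathbf{0} \mid \Y = \y)$.

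To finish, I would apply the standard Gaussian conditioning formulas to the joint law above: $\T \mid \Y = \y \sim \mathrm{N}_q\big(\bDelta^\top\bSigma^{-1}(\y-\bmu),\ \mathbf{I}_q - \bDelta^\top\bSigma^{-1}\bDelta\big)$, the conditional covariance being precisely the matrix $\bLambda$ of \eqref{eqn siglambda}, positive definite by the Schur-complement remark. Since a centered normal is symmetric about the origin, $P(\T > \mathbf{0} \mid \Y = \y)$ equals the probability that an $\mathrm{N}_q(\mathbf{0}, \bLambda)$ vector lies componentwise below $\bDelta^\top\bSigma^{-1}(\y - \bmu)$, i.e. $\Phi_q\big(\bDelta^\top\bSigma^{-1}(\y - \bmu) \mid \mathbf{0}, \bLambda\big)$; combining this with the displayed expression for $f_\Y(\y)$ gives \eqref{eqn pdf CFUSN}. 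I expect the only genuinely delicate point to be verifying the joint normality with the stated covariance block $\begin{pmatrix}\bSigma & \bDelta\\\bDelta^\top & \mathbf{I}_q\end{pmatrix}$ and, relatedly, that conditioning reproduces exactly $\bLambda$; the orthant argument for the factor $2^q$ and the symmetry step for the sign flip are routine bookkeeping, and positive definiteness of $\bSigma$ and $\bLambda$ follows at once from that of $\bOmega$.
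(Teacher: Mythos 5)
Your proof is correct and follows essentially the same route as the paper: both augment $\Y$ with a latent standard normal vector (your $\T$, the paper's $\X_0$), establish joint normality with marginal covariance $\bSigma=\bOmega+\bDelta\bDelta^\top$ and cross-covariance $\bDelta$, and then obtain the $\Phi_q$ factor from the Gaussian conditional law $\mathrm{N}_q\bigl(\bDelta^\top\bSigma^{-1}(\y-\bmu),\bLambda\bigr)$ together with the symmetry of a centered normal. The only difference is presentational: you derive the selection identity by integrating out the latent vector and folding the $2^q$ orthants, whereas the paper invokes $(\T\mid\X_0>\mathbf{0})\stackrel{d}{=}\Y$ and writes $f_\Y(\y)=P(\X_0>\mathbf{0})^{-1}P(\X_0>\mathbf{0}\mid\T=\y)f_{\T}(\y)$ directly.
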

\begin{proof}
	Define the random vector
	 \begin{equation} \label{eqn def T}
	\T=\bDelta\X_0+\X_1, 
	\end{equation} 
	where $\X_0$ and $\X_1$ are given in Definition \ref{def CFUSN}. Then,  $(\T|\X_0>\mathbf{0})$ has the same distribution of $\Y= \X_1 + \bDelta |\X_0|$.  Thus, it is enough to find the distribution of $(\T|\X_0>\mathbf{0})$, which is $f_\Y(\y)=P(\X_0>\mathbf{0})^{-1}P(\X_0>\mathbf{0}|\T=\y)f_{\T}(\y)$. 
	It can be shown that 
	\begin{equation} \label{eqn stoch rep stand SN}
	\begin{pmatrix}
	\X_0\\
	\T
	\end{pmatrix} \sim \mathrm{N}_{q+p} 
	\left[ \begin{pmatrix}\mathbf{0} \\ \bmu \end{pmatrix},
	\begin{pmatrix}
	\mathbf{I} & \bDelta^\top\\
	\bDelta & \bOmega+ \bDelta \bDelta^\top
	\end{pmatrix}
	\right],
	\end{equation}
	which implies $(\X_0|\T=\y) \sim \mathrm{N}_q(\bDelta^\top\bSigma^{-1}(\y-\bmu),\mathbf{I}-\bDelta^\top\bSigma^{-1}\bDelta)$.  The result follows immediately.  
\end{proof}

\noindent {\bf Remarks}
 \begin{enumerate}
 \item Observe that the skew-normal distribution of \cite{Sahu_Dey_Marcia} is a particular case of the definition above when $p=q$ and $\bDelta$ is a diagonal matrix. Moreover, when $\X_0$ is univariate, we have the classic skew-normal distribution used, for example, in \cite{Lachos_Ghosh_Arellano_2010} and \cite{Cabral2012}. 
\item Observe that the relation $(\bOmega,\bDelta) \to (\bOmega+\bDelta \bDelta^\top,\bDelta)$ induces a 1-1 parameterization, that is, we could parameterize the distribution in terms of $\bSigma$ and use the notation $\Y \sim \textrm{SN}_{p,q}(\bmu,\bSigma,\bDelta)$ with density given in \eqref{eqn pdf CFUSN}. Besides this, if we define the parameterization $\bDelta^*=\bSigma^{-1/2}\bDelta$,  where $\bSigma^{-1/2}$ is a square root of $\bSigma^{-1}$,   the pdf of  $\Y \sim \textrm{SN}_{p,q}(\bmu,\bSigma,\bDelta^*)$ is
\begin{equation*} 
f_{\Y}
(\y)= 2^q \phi_p(\y\mid \bmu,\bSigma) \Phi_q(\bDelta^{*\top}\bSigma^{-1/2}(\y-\bmu) \mid \mathbf{0},\bLambda),
\end{equation*}
where $ \bLambda = \mathbf{I}_q - \bDelta^\top \bSigma^{-1} \bDelta = \mathbf{I}_q-\bDelta^{*\top} \bDelta^{*}$. This parameterization is used in \citet{Arellano_Genton_2005} -- see equation (2.11) and page 109 for the pdf and moment generating function. Unless stated explicitly, we use the parameterization $(\bmu,\bOmega,\bDelta)$.  
\end{enumerate}

The  mean vector and covariance matrix of a random vector with SN distribution are given in Proposition \ref{prop Exp CFUSN}. The proof is a direct consequence of Definition \ref{def CFUSN} and the fact that if a random variable has a univariate half-normal distribution with scale parameter 1, then its mean value is $\sqrt{2/\pi}$.  We define the $q$-dimensional vector of ones by $\mathbf{1}_q=(1,\ldots,1)^\top$.
\begin{proposition} \label{prop Exp CFUSN}
Let $\Y \sim \textrm{SN}_{p,q}(\bmu,\bOmega,\bDelta)$. Then,
$$
\mathrm{E}(\Y) =  \bmu +  \sqrt{\frac{2}{\pi}} \bDelta \mathbf{1}_q \,\,\, \mbox{and} \,\,\, \mathrm{Var}(\Y) = \bSigma - {\frac{2}{\pi}} \bDelta \bDelta^\top.  
$$
\end{proposition}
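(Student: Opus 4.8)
The plan is to work directly from the stochastic representation in Definition \ref{def CFUSN}, namely $\Y = \bDelta|\X_0| + \X_1$ with $\X_0 \sim \mathrm{N}_q(\mathbf{0},\mathbf{I}_q)$ and $\X_1 \sim \mathrm{N}_p(\bmu,\bOmega)$ independent, and to propagate means and covariances through this affine-in-$|\X_0|$ expression. First I would record the two scalar facts about a standard half-normal variable $|Z|$ with $Z \sim \mathrm{N}(0,1)$: $\mathrm{E}|Z| = \sqrt{2/\pi}$ and $\mathrm{Var}(|Z|) = \mathrm{E}(Z^2) - (\mathrm{E}|Z|)^2 = 1 - 2/\pi$. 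Since the components of $\X_0$ are independent $\mathrm{N}(0,1)$ variables, the components of $|\X_0|$ are independent as well, so $\mathrm{E}(|\X_0|) = \sqrt{2/\pi}\,\mathbf{1}_q$ and $\mathrm{Var}(|\X_0|) = (1 - 2/\pi)\mathbf{I}_q$.

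For the mean, linearity of expectation together with independence of $\X_0$ and $\X_1$ gives
\[
\mathrm{E}(\Y) = \bDelta\,\mathrm{E}(|\X_0|) + \mathrm{E}(\X_1) = \sqrt{\tfrac{2}{\pi}}\,\bDelta\mathbf{1}_q + \bmu,
\]
which is the claimed formula. For the covariance, independence of $\X_0$ and $\X_1$ makes the cross terms vanish, so
\[
\mathrm{Var}(\Y) = \bDelta\,\mathrm{Var}(|\X_0|)\,\bDelta^\top + \mathrm{Var}(\X_1) = \left(1 - \tfrac{2}{\pi}\right)\bDelta\bDelta^\top + \bOmega.
\]
Finally I would substitute the identity $\bSigma = \bOmega + \bDelta\bDelta^\top$ from \eqref{eqn siglambda} to rewrite $\bOmega + \bDelta\bDelta^\top - \tfrac{2}{\pi}\bDelta\bDelta^\top = \bSigma - \tfrac{2}{\pi}\bDelta\bDelta^\top$, giving the stated variance.

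There is no real obstacle here: the only points requiring a moment's care are the computation of $\mathrm{Var}(|Z|)$ for the standard half-normal (where one must not forget to subtract the squared mean) and the observation that the diagonal covariance structure of $|\X_0|$ follows from the independence of the coordinates of $\X_0$, which is exactly the $\bSigma = \mathbf{I}_q$ case noted in the discussion of the half-normal just before Definition \ref{def CFUSN}. Everything else is linearity of expectation and the bilinearity of covariance applied to the affine map $\bx \mapsto \bDelta\bx + \X_1$.
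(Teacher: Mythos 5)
Your proof is correct and follows exactly the route the paper indicates: it derives both moments directly from the representation $\Y=\bDelta|\X_0|+\X_1$ in Definition \ref{def CFUSN}, using the half-normal mean $\sqrt{2/\pi}$ (and variance $1-2/\pi$) together with independence, and then rewriting via $\bSigma=\bOmega+\bDelta\bDelta^\top$. The paper only sketches this argument, so your write-up simply fills in the same computation in detail; there is nothing to correct.
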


Before defining the skew-t distribution, we enunciate a  result  regarding marginal and conditional distributions of a random vector with Student-t distribution that will be helpful to obtain results for the skew-t distribution similar to the ones presented to the SN distribution.  The proof can be found in  \citet{Arellano.Bolfarine.1995} (see also \citet[Theorem ~3.8]{Fang.Kotz.Ng.1990}). 
\begin{lema}\label{eqn t closed}
	Let  $\Y \sim \textrm{\em t}_p(\bmu,\bSigma,\nu)$ and consider the partition given in (\ref{eqn partition SN}) and the induced  partitions of   $\bmu$ and $\bSigma$. It can be shown that:
	\begin{itemize}
		\item[(i)] $\Y_1 \sim \textrm{\em
			t}_{p_1}(\bmu_1,\bSigma_{11},\nu)$,
		\item[(ii)] $\Y_2|\Y_1=\y_1 \sim \textrm{\em t}_{p_2}(\bmu_{2.1},\widetilde{\bSigma}_{22.1},\nu+p_1)$,
	\end{itemize}
	where \,
	  $\bmu_{2.1}=\bmu_2+\bSigma_{21}\bSigma_{11}^{-1}(\y_1-\bmu_1)$ and \, $\widetilde{\bSigma}_{22.1}=\dfrac{\nu+ d_1(\y_1)}{\nu+p_1}\ \bSigma_{22.1}$,  with\, $ \bSigma_{22.1}=\bSigma_{22}-\bSigma_{21} \bSigma_{11}^{-1} \bSigma_{12}$ and\, $d_1(\y_1)=(\y_1-\bmu_1)^\top \bSigma_{11}^{-1} (\y_1-\bmu_1)$.
	  %$\bmu_{2.1}$ and $\bSigma_{22.1}$ as given in \eqref{eq:condparms}.
	  % \begin{equation}\label{eq:condparms}
% \bmu_{2.1}=\bmu_2+\bSigma_{21}\bSigma_{11}^{-1}(\y_1-\bmu_1),  \,\,\,   \bSigma_{22.1}=\bSigma_{22}-\bSigma_{21} \bSigma_{11}^{-1} \bSigma_{12}, \,\,\,  \mbox{and} \,\,\,  \bLambda_1=\mathbf{I}_q - \bDelta_1^\top \bSigma_{11}^{-1} \bDelta_1.
% \end{equation}
\end{lema}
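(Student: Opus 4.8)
The plan is to lean on the scale-mixture-of-normals representation of the multivariate $t$ distribution: if $U \sim \mathrm{Gamma}(\nu/2,\nu/2)$ and, conditionally on $U=u$, $\Y \sim \mathrm{N}_p(\bmu,u^{-1}\bSigma)$, then marginally $\Y \sim t_p(\bmu,\bSigma,\nu)$. Before touching the lemma I would record the slightly more general mixing identity obtained by integrating a normal density against a gamma density: if $\Z \mid U=u \sim \mathrm{N}_k(\bm{m},u^{-1}\bm{V})$ with $U \sim \mathrm{Gamma}(a,b)$ independent, then
\begin{equation*}
\Z \sim t_k\!\left(\bm{m},\,\tfrac{b}{a}\bm{V},\,2a\right).
\end{equation*}
Its verification is a one-line Gamma integral, $\int_0^\infty u^{a+k/2-1}e^{-u(b+Q/2)}\,du \propto (b+Q/2)^{-(a+k/2)}$ with $Q=(\z-\bm{m})^\top\bm{V}^{-1}(\z-\bm{m})$, whose right-hand side is, up to the normalizing constant, exactly the $t_k$ density with the stated parameters. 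Both parts of the lemma then follow from this identity together with the classical Gaussian marginalization and conditioning formulas.

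For part (i): conditionally on $U=u$, the vector $\Y_1$ is the leading block of $\mathrm{N}_p(\bmu,u^{-1}\bSigma)$, hence $\Y_1 \mid U=u \sim \mathrm{N}_{p_1}(\bmu_1,u^{-1}\bSigma_{11})$, while the mixing law of $U$ is untouched. Applying the identity with $a=b=\nu/2$ gives $\Y_1 \sim t_{p_1}(\bmu_1,\bSigma_{11},\nu)$.

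For part (ii) I would proceed in two moves. First, the Gaussian conditioning formula applied inside each level $U=u$ yields $\Y_2 \mid \Y_1=\y_1,\,U=u \sim \mathrm{N}_{p_2}(\bmu_{2.1},u^{-1}\bSigma_{22.1})$; the point to notice is that the powers of $u$ cancel in the regression term, $(u^{-1}\bSigma_{21})(u^{-1}\bSigma_{11})^{-1}=\bSigma_{21}\bSigma_{11}^{-1}$, so the conditional mean $\bmu_{2.1}$ is free of $u$, whereas the conditional covariance is $u^{-1}\bSigma_{22.1}$. Second, I need the law of $U$ given $\Y_1=\y_1$: multiplying the $\mathrm{N}_{p_1}(\bmu_1,u^{-1}\bSigma_{11})$ density, which as a function of $u$ is proportional to $u^{p_1/2}\exp\{-\tfrac{u}{2}\,d_1(\y_1)\}$, by the $\mathrm{Gamma}(\nu/2,\nu/2)$ density and renormalizing shows $U \mid \Y_1=\y_1 \sim \mathrm{Gamma}\!\left(\tfrac{\nu+p_1}{2},\,\tfrac{\nu+d_1(\y_1)}{2}\right)$. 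Feeding $k=p_2$, $\bm{m}=\bmu_{2.1}$, $\bm{V}=\bSigma_{22.1}$, $a=\tfrac{\nu+p_1}{2}$, $b=\tfrac{\nu+d_1(\y_1)}{2}$ into the mixing identity gives
\begin{equation*}
\Y_2 \mid \Y_1=\y_1 \sim t_{p_2}\!\left(\bmu_{2.1},\,\tfrac{\nu+d_1(\y_1)}{\nu+p_1}\bSigma_{22.1},\,\nu+p_1\right),
\end{equation*}
which is the claim, since $\widetilde{\bSigma}_{22.1}=\tfrac{\nu+d_1(\y_1)}{\nu+p_1}\bSigma_{22.1}$.

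I expect the only genuinely fiddly point to be the bookkeeping in the conjugacy step for $U \mid \Y_1=\y_1$---confirming that the power of $u$ collects to $(\nu+p_1)/2-1$ and the exponential rate to $(\nu+d_1(\y_1))/2$, and that the matching of Gamma normalizing constants with the $t$ normalizing constants is precisely what legitimizes the two applications of the mixing identity. A fully self-contained alternative that avoids $U$ altogether is to write the $t_p$ and $t_{p_1}$ densities explicitly and simplify the ratio $f_\Y(\y)/f_{\Y_1}(\y_1)$ directly, using $|\bSigma|=|\bSigma_{11}|\,|\bSigma_{22.1}|$ and the quadratic-form splitting $d(\y)=d_1(\y_1)+(\y_2-\bmu_{2.1})^\top\bSigma_{22.1}^{-1}(\y_2-\bmu_{2.1})$; collecting terms turns that ratio into the announced $t_{p_2}$ density. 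This is essentially the computation in \citet{Arellano.Bolfarine.1995}, to which the lemma already defers.
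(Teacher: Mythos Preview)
Your proof is correct. The paper itself does not give a proof of this lemma; it simply states that the proof can be found in \citet{Arellano.Bolfarine.1995} (see also \citet[Theorem~3.8]{Fang.Kotz.Ng.1990}). So there is nothing to compare against in the paper beyond the bare citation.

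Your route via the scale-mixture representation $\Y\mid U=u\sim\mathrm{N}_p(\bmu,u^{-1}\bSigma)$, $U\sim\mathrm{Gamma}(\nu/2,\nu/2)$ is clean and fully rigorous: the general mixing identity you record is correct (with $\bSigma=(b/a)\bV$ and $\nu=2a$), part~(i) follows immediately, and for part~(ii) the two ingredients---Gaussian conditioning at each level of $U$ and the Gamma posterior $U\mid\Y_1=\y_1\sim\mathrm{Gamma}\big((\nu+p_1)/2,(\nu+d_1(\y_1))/2\big)$---combine exactly as you say. The alternative density-ratio computation you sketch at the end, using $|\bSigma|=|\bSigma_{11}|\,|\bSigma_{22.1}|$ and the quadratic-form splitting, is indeed closer in spirit to how \citet{Arellano.Bolfarine.1995} argue; your mixture approach is arguably more transparent and has the advantage of making the Gamma conjugacy explicit, which is precisely the mechanism exploited later in the paper's EM algorithm.
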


\begin{definition} \label{def CFSUT}
Let $\X \sim \textrm{SN}_{p,q}(\mathbf{0},\bOmega,\bDelta)$ and $U \sim \textrm{Gamma}(\nu/2,\nu/2)$ be independent, where $\textrm{Gamma}(a,b)$ denotes a Gamma distribution with mean $a/b$ and variance  $a/b^2$, with $b>0$. Let $\bmu$ be a vector of constants of length $p$. We say that the distribution of $\Y = \bmu + U^{-1/2} \X$ is  skew-t (ST) with location vector $\bmu$, scale matrix $\bOmega$, shape matrix $\bDelta$, and $\nu$ degrees of freedom. We use the notation $\Y \sim ST_{p,q}(\bmu,\bOmega,\bDelta,\nu)$. 	
\end{definition}

In what follows, $\mathrm{t}_{p}(\cdot \mid \bmu,\bSigma,\nu)$ and ${T}_{p}(\cdot \mid \bmu,\bSigma,\nu)$ respectively denote the density and the cumulative distribution of the $p$-variate Student-t distribution with location vector $\bmu$, scale matrix $\bSigma$ and $\nu$ degrees of freedom.
\begin{proposition} \label{prop CFUST density}
If  $\Y \sim ST_{p,q}(\bmu,\bOmega,\bDelta,\nu)$, then $\Y$ has density function
$$
f_\Y(\y) = 2^{q} \mathrm{t}_{p}(\y\mid \bmu,\bSigma,\nu) {T}_q \left(\bDelta^\top\bSigma^{-1}(\y-\bmu) \sqrt{\frac{\nu+p}{\nu+{d}(\y)}} \mid \mathbf{0}, \bLambda, \nu+p\right), 
$$
where $\bSigma$ and $\bLambda$ are given in (\ref{eqn siglambda}) and $d(\y)=(\y-\bmu)^\top \bSigma^{-1} (\y-\bmu)$. 
\end{proposition}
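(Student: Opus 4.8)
\; The idea is to mimic the proof of the skew-normal density \eqref{eqn pdf CFUSN}, replacing the Gaussian conditioning step by its Student-$t$ analogue provided by Lemma~\ref{eqn t closed}.

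First I would unwind the hierarchy in Definition~\ref{def CFSUT}: write $\Y=\bmu+U^{-1/2}\X$ with $\X\sim\SN_{p,q}(\mathbf{0},\bOmega,\bDelta)$ and $U\sim\textrm{Gamma}(\nu/2,\nu/2)$ independent. By Definition~\ref{def CFUSN} and the argument behind \eqref{eqn pdf CFUSN}, $\X\hsd(\T\mid\X_0>\mathbf{0})$, where $\T=\bDelta\X_0+\X_1$ and $(\X_0^\top,\T^\top)^\top$ is the centered Gaussian vector of \eqref{eqn stoch rep stand SN}. Since $U>0$, putting $\bS_0=U^{-1/2}\X_0$ gives $\{\X_0>\mathbf{0}\}=\{\bS_0>\mathbf{0}\}$, and (using that $U$ is independent of $(\X_0,\T)$)
\[ \Y\ \hsd\ \bmu+\bigl(U^{-1/2}\T\ \big|\ \bS_0>\mathbf{0}\bigr). \]
Conditionally on $U=u$ the pair $\bigl(\bS_0^\top,(U^{-1/2}\T)^\top\bigr)^\top$ is Gaussian with the covariance of \eqref{eqn stoch rep stand SN} scaled by $u^{-1}$; hence, recalling that a $\textrm{Gamma}(\nu/2,\nu/2)$ scale mixture of a centered Gaussian is a multivariate Student-$t$, this pair is marginally $\mathrm{t}_{q+p}$ with zero location, $\nu$ degrees of freedom, and scale matrix whose blocks are $\mathbf{I}_q$, $\bDelta$, $\bDelta^\top$ and $\bSigma$, with $\bSigma$ as in \eqref{eqn siglambda}.

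Next I would evaluate $f_\Y(\y)=P(\bS_0>\mathbf{0})^{-1}\,P\!\bigl(\bS_0>\mathbf{0}\mid U^{-1/2}\T=\y-\bmu\bigr)\,f_{U^{-1/2}\T}(\y-\bmu)$. The prior factor is $P(\X_0>\mathbf{0})=2^{-q}$; the marginal factor is $\mathrm{t}_p(\y-\bmu\mid\mathbf{0},\bSigma,\nu)=\mathrm{t}_p(\y\mid\bmu,\bSigma,\nu)$ by Lemma~\ref{eqn t closed}(i); and applying Lemma~\ref{eqn t closed}(ii) with $\Y_1=U^{-1/2}\T$ ($p_1=p$) and $\Y_2=\bS_0$ ($p_2=q$), the conditional law of $\bS_0$ given $U^{-1/2}\T=\y-\bmu$ is $\mathrm{t}_q\!\bigl(\w,\ \tfrac{\nu+d(\y)}{\nu+p}\,\bLambda,\ \nu+p\bigr)$, where $\w=\bDelta^\top\bSigma^{-1}(\y-\bmu)$, using $\mathbf{I}_q-\bDelta^\top\bSigma^{-1}\bDelta=\bLambda$ from \eqref{eqn siglambda} and $d_1(\y-\bmu)=d(\y)$.

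Finally, since this conditional distribution is symmetric about its location $\w$, letting $\R$ denote the associated centered Student-$t$ vector gives $P(\bS_0>\mathbf{0})=P(\R>-\w)=P(\R<\w)=T_q\!\bigl(\w\mid\mathbf{0},\tfrac{\nu+d(\y)}{\nu+p}\bLambda,\nu+p\bigr)$, and the scaling identity $T_q(\w\mid\mathbf{0},c^2\bLambda,m)=T_q(c^{-1}\w\mid\mathbf{0},\bLambda,m)$ with $c^2=\tfrac{\nu+d(\y)}{\nu+p}$ rewrites this as $T_q\!\bigl(\w\sqrt{(\nu+p)/(\nu+d(\y))}\mid\mathbf{0},\bLambda,\nu+p\bigr)$; multiplying the three factors produces the stated density. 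I expect the only delicate part to be the bookkeeping in Lemma~\ref{eqn t closed}(ii) — matching the sub-blocks and correctly picking up the scale factor $(\nu+d(\y))/(\nu+p)$ — together with the two elementary facts about the Student-$t$ cdf (symmetry about the location and the scaling identity); there is no deeper obstacle. An equivalent route that avoids the conditioning lemma is to integrate the SN density \eqref{eqn pdf CFUSN} of $U^{-1/2}\X$ against the Gamma density of $U$, recognize the Gaussian-times-Gamma part as $\mathrm{t}_p(\y\mid\bmu,\bSigma,\nu)$ times the $\textrm{Gamma}((\nu+p)/2,(\nu+d(\y))/2)$ posterior of $U$, and identify the leftover expectation of $\Phi_q\!\bigl(U^{1/2}\w\mid\mathbf{0},\bLambda\bigr)$ as the same Student-$t$ cdf.
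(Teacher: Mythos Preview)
Your proposal is correct and follows essentially the same route as the paper: represent $\Y$ via the selection mechanism $\bmu+(\T\mid \bV>\mathbf{0})$ with $(\bV^\top,\T^\top)^\top$ jointly Student-$t$ (your $\bS_0$ is the paper's $\bV$), then read off the three factors using Lemma~\ref{eqn t closed}. The paper is slightly terser in the last step, whereas you spell out the symmetry/scaling identities for the Student-$t$ cdf and offer the alternative Gamma-mixing derivation, but the argument is the same.
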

\begin{proof} By definitions  \ref{def CFUSN} and \ref {def CFSUT}, we have that
\begin{equation} \label{eqn_repestocskewt}
\Y =\bmu+  U^{-1/2} (\bDelta |\X_0| + \X_1),
\end{equation}	
where $\X_0 \sim \mathbf{N}_q(\mathbf{0},\mathbf{I})$ and $\X_1 \sim \mathbf{N}_p(\mathbf{0},\bOmega)$ are independent. 
Let $\bV=U^{-1/2} \X_0$ and $\W=U^{-1/2}\X_1$. Then,
	\begin{equation*} 
\begin{pmatrix}
\bV\\
\W
\end{pmatrix} \sim \mathrm{t}_{q+p} 
\left[ \begin{pmatrix}\mathbf{0} \\ \mathbf{0} \end{pmatrix}, 
\begin{pmatrix}
\mathbf{I} & \mathbf{0}^\top\\
 \mathbf{0} & \bOmega
\end{pmatrix},\nu
\right].
\end{equation*}
Observe that  the distribution of $\Y$ is the same as  $\bmu+(\T|\bV>\mathbf{0})$, where $\T=\bDelta \bV+ \W$, and we have  
 	\begin{equation*} 
 \begin{pmatrix}
 \bV\\
\bmu+ \T
 \end{pmatrix} \sim \mathrm{t}_{q+p} 
 \left[ \begin{pmatrix}\mathbf{0} \\ \mathbf{\bmu} \end{pmatrix}, 
 \begin{pmatrix}
 \mathbf{I} & \bDelta^\top\\
 \bDelta & \bOmega+ \bDelta \bDelta^\top
 \end{pmatrix},\nu
 \right]. 
 \end{equation*}
Thus, the density of $\Y$ is $f_\Y(\y) =P(\bV>\mathbf{0})^{-1}P(\bV>\mathbf{0}|\bmu+\T=\y)f_{\bmu+\T}(\y)$.
Since $\bV \sim \mathrm{t}_q(\mathbf{0},\mathbf{I})$, $P(\bV>\mathbf{0})=2^{-q}$, and, by Lemma \ref{eqn t closed}, we know that $\bV|\bmu+\T=\y \sim \mathrm{t}_q(\bDelta^\top\bSigma^{-1}(\y-\bmu),(\nu+d(\y)/(\nu+p))\bLambda,\nu+p)$, which concludes the proof. 	
\end{proof}	

Similar to the SN case, by Definition \ref{def CFSUT} and Proposition \ref{prop affine transf SMSN}, it can be shown that affine transformations of ST distributions are still ST distributed, as stated in the following proposition. 
\begin{proposition}  \label{prop affine ST}
	Let  $\Ae$ be a $m \times p$ matrix,  $\be$ be a vector of length $m$, and $\Y \sim  ST_{p,q}(\bmu,\bOmega,\bDelta,\nu)$. Then,   $\Z= \Ae\Y+\be \sim  ST_{m,q}(\Ae\bmu  +\be,\Ae \bOmega \Ae',\Ae \bDelta,\nu)$. 
\end{proposition}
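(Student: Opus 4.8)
The plan is to argue through the stochastic representation rather than through the density of Proposition~\ref{prop CFUST density}, mirroring the proofs of Propositions~\ref{prop affine transf SMSN} and~\ref{prop CFUST density}. By Definition~\ref{def CFSUT} we may write $\Y = \bmu + U^{-1/2}\X$, where $\X \sim \mathrm{SN}_{p,q}(\mathbf{0},\bOmega,\bDelta)$ and $U \sim \mathrm{Gamma}(\nu/2,\nu/2)$ are independent. Substituting into the definition of $\Z$,
\[
\Z = \Ae\Y + \be = (\Ae\bmu + \be) + U^{-1/2}(\Ae\X).
\]

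Next I would pin down the law of $\Ae\X$. Because $\X$ is skew-normal with zero location, Proposition~\ref{prop affine transf SMSN} yields $\Ae\X \sim \mathrm{SN}_{m,q}(\mathbf{0},\,\Ae\bOmega\Ae^\top,\,\Ae\bDelta)$. Since $\Ae\X$ is a function of $\X$ alone it stays independent of $U$, so the pair $(\Ae\X,U)$ satisfies the hypotheses of Definition~\ref{def CFSUT} with location $\Ae\bmu+\be$, scale $\Ae\bOmega\Ae^\top$, shape $\Ae\bDelta$ and $\nu$ degrees of freedom, and the displayed identity is exactly the representation appearing in that definition. Hence $\Z \sim \mathrm{ST}_{m,q}(\Ae\bmu+\be,\Ae\bOmega\Ae^\top,\Ae\bDelta,\nu)$. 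The degrees of freedom are preserved because the same scalar $U^{-1/2}$ scales $\X$ and $\Ae\X$ simultaneously.

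The one place that needs care is a regularity assumption buried in Proposition~\ref{prop affine transf SMSN}: it is stated for $\Ae$ of full row rank $m$, which is what makes $\Ae\bOmega\Ae^\top$ positive definite as Definition~\ref{def CFUSN} requires. To cover a general $m\times p$ matrix $\Ae$ I would instead expand one more level, using $\Y = \bmu + U^{-1/2}(\bDelta|\X_0| + \X_1)$ with $\X_0 \sim \mathrm{N}_q(\mathbf{0},\mathbf{I}_q)$ and $\X_1 \sim \mathrm{N}_p(\mathbf{0},\bOmega)$ independent, giving
\[
\Z = (\Ae\bmu+\be) + U^{-1/2}\big((\Ae\bDelta)|\X_0| + \Ae\X_1\big),
\]
where $\Ae\X_1 \sim \mathrm{N}_m(\mathbf{0},\Ae\bOmega\Ae^\top)$ holds for every $\Ae$; one then reads the skew-$t$ family as allowing a positive semidefinite scale matrix. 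I expect this rank/degeneracy bookkeeping to be essentially the only obstacle — everything else is a direct application of Definition~\ref{def CFSUT}.
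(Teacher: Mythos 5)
Your argument is exactly the one the paper intends: it proves the result ``by Definition~\ref{def CFSUT} and Proposition~\ref{prop affine transf SMSN},'' writing $\Y=\bmu+U^{-1/2}\X$ with $\X\sim\textrm{SN}_{p,q}(\mathbf{0},\bOmega,\bDelta)$ independent of $U$ and pushing $\Ae$ through the SN part, so your proposal is correct and follows the paper's route. Your extra remark on the rank-$m$ hypothesis in Proposition~\ref{prop affine transf SMSN} (needed so that $\Ae\bOmega\Ae^\top$ stays positive definite, or else one must allow a degenerate scale) is a legitimate point of care that the paper passes over silently.
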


If $\Y=(\Y_1^\top,\Y_2^\top)^\top$, we have $\Y_1 \sim \textrm{ST}_{p_1, q}(\bmu_1,\bOmega_{11},\bDelta_{1},\nu)$. An analogous result can be shown for $\Y_2$.	

Equation \eqref{eqn_repestocskewt} also  implies the following result: 
\begin{proposition} \label{prop strep CFUST}
Let  $\Y \sim ST_{p,q}(\bmu,\bOmega,\bDelta,\nu)$ and $\bS =U^{-1/2} |\X_0|$. Then, $\Y$ admits the following hierarchical representation   
\begin{align*}
\Y| \bS=\bs, U=u  & \sim   \textrm{{N}}_{p}(\bmu + \bDelta \bs,u^{-1}\bOmega); \\
\bS|U=u  & \sim   \textrm{HN}_q(\mathbf{0},u^{-1} \mathbf{I}_q); \\
U & \sim Gamma(\nu/2,\nu/2), 
\end{align*}
where $\textrm{HN}_q(\mathbf{0},u^{-1} \mathbf{I}_q)$ denotes  the $q$-dimensional half-normal
distribution with location parameter $\mathbf{0}$ and scale matrix $u^{-1}\mathbf{I}_q$.	
\end{proposition}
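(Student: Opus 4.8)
The plan is to build the hierarchy from the innermost layer outwards, starting from the stochastic representation \eqref{eqn_repestocskewt} already obtained in the proof of Proposition \ref{prop CFUST density}. By definitions \ref{def CFUSN} and \ref{def CFSUT} we have
$\Y = \bmu + U^{-1/2}\bigl(\bDelta|\X_0| + \X_1\bigr)$ with $\X_0 \sim \mathrm{N}_q(\mathbf{0},\mathbf{I}_q)$, $\X_1 \sim \mathrm{N}_p(\mathbf{0},\bOmega)$ and $U \sim \mathrm{Gamma}(\nu/2,\nu/2)$ mutually independent. Writing $\bS = U^{-1/2}|\X_0|$, this reads $\Y = \bmu + \bDelta\bS + U^{-1/2}\X_1$, and the three conditional statements will be read off this identity one at a time.

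The bottom layer $U \sim \mathrm{Gamma}(\nu/2,\nu/2)$ is immediate from Definition \ref{def CFSUT}. For the middle layer, I would use that $\X_0 \perp U$, so conditioning on $U=u$ simply replaces the random factor $U^{-1/2}$ by the constant $u^{-1/2}$: hence $\bS \mid U=u \hsd |u^{-1/2}\X_0|$, and since $u^{-1/2}\X_0 \sim \mathrm{N}_q(\mathbf{0},u^{-1}\mathbf{I}_q)$, by the definition of the half-normal given in Section \ref{sec:skewed} we get $\bS \mid U=u \sim \mathrm{HN}_q(\mathbf{0},u^{-1}\mathbf{I}_q)$.

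For the top layer, the key observation is that $\X_1$ is independent of the pair $(U,\X_0)$, and therefore independent of $(U,\bS)$ because $\bS = g(U,\X_0)$ for the measurable map $g(u,\x) = u^{-1/2}|\x|$. Consequently the conditional law of $\X_1$ given $(U,\bS) = (u,\bs)$ is still $\mathrm{N}_p(\mathbf{0},\bOmega)$, and from $\Y = \bmu + \bDelta\bS + U^{-1/2}\X_1$ we conclude $\Y \mid \bS=\bs, U=u \hsd \bmu + \bDelta\bs + u^{-1/2}\X_1 \sim \mathrm{N}_p(\bmu+\bDelta\bs,\,u^{-1}\bOmega)$, as claimed.

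The only delicate point — the one I would flag as the main (and rather minor) obstacle — is this last justification that conditioning on the realized value of $\bS$, which is a deterministic function of $(U,\X_0)$ and hence correlated with $|\X_0|$, leaves the distribution of $\X_1$ untouched; this is precisely the elementary fact that $\X_1 \perp (U,\X_0)$ implies $\X_1 \perp \bigl(U, g(U,\X_0)\bigr)$ for measurable $g$. Everything else is routine bookkeeping with the stochastic representation.
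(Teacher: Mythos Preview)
Your proof is correct and follows exactly the route the paper takes: the paper simply states that the hierarchical representation is implied by equation \eqref{eqn_repestocskewt}, and you have spelled out that implication layer by layer. Your argument is in fact more detailed than the paper's, which offers no further justification beyond the one-line remark preceding the proposition.
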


It is straightforward to show that the density of $\bS|U=u $ in Proposition \ref{prop strep CFUST} is 
$$
f_{\bS|U}(\bs|u) = 2^q (2 \pi)^{-q/2} u^{q/2} \exp\left(-\frac{u}{2}\bs^\top \bs \right), \quad \bs > \mathbf{0}.  
$$ 

\begin{proposition}
	Let  $\Y \sim ST_{p,q}(\bmu,\bOmega,\bDelta,\nu)$. Then,
	\begin{equation} \label{eqn exp CFUST}
	\mathrm{E}(\Y)= \bmu + \sqrt{2/\pi}\kappa_1
	%\sqrt{\frac{\nu}{\pi}}\frac{\Gamma\left(\frac{\nu-1}{2}\right)}{\Gamma\left(\frac{\nu}{2}\right)} 
	\bDelta \mathbf{1}_q \,\,\, \mbox{and} \,\,\, \mathrm{Var}(\Y) = \frac{\nu}{\nu-2}\left(\bSigma - {\frac{2}{\pi}} \bDelta \bDelta^\top\right) + a(\nu)\bDelta \,{\bf J}_q\,\bDelta^\top, 
	\end{equation}
where $\kappa_1 = (\nu/2)^{1/2} \Gamma\left(\frac{\nu-1}{2}\right)/\Gamma\left(\frac{\nu}{2}\right)$, $a(\nu)=\frac{2}{\pi}\left(\frac{\nu}{\nu-2} - \kappa_1^2\right)$  and ${\bf J}_q$ is a $q \times q$ matrix of ones.
\end{proposition}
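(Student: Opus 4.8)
The plan is to work entirely from the hierarchical representation in Proposition~\ref{prop strep CFUST}, using the tower property for the mean and the law of total variance (applied twice) for the covariance matrix, so that everything reduces to the first two moments of a univariate half-normal variable and to the moments $\mathrm{E}(U^{-1/2})$ and $\mathrm{E}(U^{-1})$ of $U\sim\textrm{Gamma}(\nu/2,\nu/2)$. A direct integration against the density in Proposition~\ref{prop CFUST density} is possible but much messier, so I would avoid it.

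For the mean, I would first condition on $(\bS,U)$: since $\Y\mid \bS=\bs,U=u \sim \textrm{N}_p(\bmu+\bDelta\bs,u^{-1}\bOmega)$, the tower property gives $\mathrm{E}(\Y)=\bmu+\bDelta\,\mathrm{E}(\bS)$. Conditioning next on $U$, each coordinate of $\bS\mid U=u$ is distributed as $u^{-1/2}|Z|$ with $Z\sim\textrm{N}(0,1)$, hence $\mathrm{E}(\bS\mid U)=\sqrt{2/\pi}\,U^{-1/2}\mathbf{1}_q$ and $\mathrm{E}(\bS)=\sqrt{2/\pi}\,\mathrm{E}(U^{-1/2})\,\mathbf{1}_q$. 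Using the Gamma moment formula $\mathrm{E}(U^{r})=(\nu/2)^{-r}\Gamma(\nu/2+r)/\Gamma(\nu/2)$, valid for $\nu/2+r>0$, with $r=-1/2$ gives $\mathrm{E}(U^{-1/2})=(\nu/2)^{1/2}\Gamma((\nu-1)/2)/\Gamma(\nu/2)=\kappa_1$, which yields the stated expression for $\mathrm{E}(\Y)$ (and requires $\nu>1$).

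For the covariance matrix I would apply the law of total variance conditioning on $(\bS,U)$,
\[
\mathrm{Var}(\Y)=\mathrm{E}\big[\mathrm{Var}(\Y\mid \bS,U)\big]+\mathrm{Var}\big[\mathrm{E}(\Y\mid \bS,U)\big]=\mathrm{E}(U^{-1})\,\bOmega+\bDelta\,\mathrm{Var}(\bS)\,\bDelta^\top,
\]
and then obtain $\mathrm{Var}(\bS)$ by a second application, conditioning on $U$: the coordinates of $\bS\mid U=u$ are independent with variance $u^{-1}\mathrm{Var}(|Z|)=u^{-1}(1-2/\pi)$, so $\mathrm{E}[\mathrm{Var}(\bS\mid U)]=\mathrm{E}(U^{-1})(1-2/\pi)\mathbf{I}_q$, while $\mathrm{Var}[\mathrm{E}(\bS\mid U)]=(2/\pi)\mathrm{Var}(U^{-1/2})\,\mathbf{1}_q\mathbf{1}_q^\top=(2/\pi)\big(\mathrm{E}(U^{-1})-\kappa_1^2\big)\,{\bf J}_q$. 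Since $\mathrm{E}(U^{-1})=\nu/(\nu-2)$ for $\nu>2$, the latter is exactly $a(\nu)\,{\bf J}_q$. Substituting back gives $\mathrm{Var}(\Y)=\frac{\nu}{\nu-2}\bOmega+\frac{\nu}{\nu-2}(1-2/\pi)\bDelta\bDelta^\top+a(\nu)\bDelta\,{\bf J}_q\,\bDelta^\top$; grouping the first two terms and using $\bSigma=\bOmega+\bDelta\bDelta^\top$ from \eqref{eqn siglambda} produces the claimed formula.

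The computation is routine rather than hard; the steps that need the most care are evaluating the Gamma moments $\mathrm{E}(U^{-1/2})$ and $\mathrm{E}(U^{-1})$ and tracking the existence conditions ($\nu>1$ for the mean, $\nu>2$ for the variance), noting that the conditional noise term $u^{-1}\bOmega$ contributes $\mathrm{E}(U^{-1})\bOmega$ and does not interact with $\bS$, and using the conditional independence of the coordinates of $\bS$ given $U$ — this last point is what separates the diagonal $\mathbf{I}_q$ contribution from the rank-one ${\bf J}_q$ contribution in $\mathrm{Var}(\bS)$.
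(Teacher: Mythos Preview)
Your proposal is correct and follows essentially the same strategy as the paper: exploit the scale-mixture/hierarchical representation and apply the tower property together with the Gamma moments $\mathrm{E}(U^{-1/2})=\kappa_1$ and $\mathrm{E}(U^{-1})=\nu/(\nu-2)$. The only cosmetic difference is that the paper conditions on $U$ alone and invokes Proposition~\ref{prop Exp CFUSN} for the conditional SN moments, whereas you condition one level deeper on $(\bS,U)$ and compute the half-normal moments directly; the two routes are equivalent and your variance computation is in fact more explicit than the paper's, which merely says the variance ``can be easily derived.''
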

To prove this result, notice that by \eqref{eqn_repestocskewt}, the distribution of $\Y|U=u$ is $SN_{p,q}(\bmu,u^{-1}\bOmega,u^{-1/2}\bDelta)$. Thus, by Proposition  \ref{prop Exp CFUSN}, we have that $\mathrm{E}(\Y|U=u)=  \bmu + u^{-1/2} \sqrt{{2}/{\pi}} \bDelta \mathbf{1}_q,$ which implies $\mathrm{E}(\Y)=\mathrm{E}[\mathrm{E}(\Y|U)]=  \bmu + \mathrm{E}(U^{-1/2}) \sqrt{{2}/{\pi}} \bDelta \mathbf{1}_q$. We have that $\mathrm{E}(U^{-1/2})= \kappa_1$,
%$(\nu/2)^{1/2} \Gamma\left(\frac{\nu-1}{2}\right)/\Gamma\left(\frac{\nu}{2}\right)$, 
which can be easily obtained since $U$ has a Gamma distribution, and the first part of the result follows. Similarly, the variance can be easily derived.

\section{The model}\label{sec:model}
\subsection{Definition}
The standard linear mixed model introduced by \cite{laird1982random} has been a widely used tool to model the correlation within-subjects often present in longitudinal data. In this work, we  present a flexible extension of this model. 

Suppose that  there are $N$ subjects, with the $i$th subject having $n_i$ observations, then a linear mixed model can be written as
\begin{equation} \label{def mod longitudinal}
\Y_i = \X_i\bbeta + \Z_i \be_i +  \bepsilon_i, \quad  i=1,\ldots,N, 
\end{equation}
We consider that 
\begin{equation} \label{eqn joint dist bi epsi}
\left( \begin{array}{c} \mathbf{b}_{i} \\\bepsilon_{i} \end{array}  \right)  \ind \mathrm{ST}_{q+n_i,r}\left[\left( \begin{array}{c} b\bDelta \mathbf{1}_r \\\mathbf{0}_{n_i \times 1}\end{array}  \right), \left( \begin{array}{cc} \D & \mathbf{0}_{q \times n_i} \\
\mathbf{0}_{n_i \times q} & \bOmega_i
 \end{array}\right),\left( \begin{array}{c} \bDelta \\\mathbf{0}_{n_i \times r} \end{array}  \right), \nu  \right],
\end{equation}
where  $b=-\sqrt{\nu/\pi} \Gamma\left(\frac{\nu-1}{2}\right)/\Gamma\left(\frac{\nu}{2}\right)$, and $\,\ind\,$
denotes independent  random vectors. 
This setup implies 
\begin{equation*} 
\be_i \iid  \mathrm{ST}_{q,r}(b\bDelta \mathbf{1}_r, \D,\bDelta,\nu), \quad \bepsilon_{i} \ind \mathrm{t}_{n_i}(\mathbf{0},\bOmega_i,\nu),  \quad i=1,\ldots,N,  
\end{equation*}
where $\iid $
denotes independent and identically distributed  random vectors. 
Thus, we have that $\textrm{E}[\be_i]=\mathbf{0}$, according to \eqref{eqn exp CFUST}. Moreover, from Proposition \ref{prop strep CFUST}, we have the following stochastic representation of the complete data model 
\begin{align}
\Y_i|\be_i,U_i=u_i  & \sim \mathrm{N}_{n_i}(\X_i\bbeta + \Z_i \be_i, u_i^{-1}\bOmega_i); \label{eqn first re 1}\\
\be_i|\bS_i=\s_i,U_i=u_i  & \sim  \mathrm{N}_{q}(\bDelta(b \mathbf{1}_r+ \s_i), u_i^{-1}\D); \label{eqn repest bi}\\
\bS_i|U_i=u_i  & \sim   \textrm{HN}_r(\mathbf{0},u_i^{-1} \mathbf{I}_r); \label{eqn repest si}  \\
U_i & \sim \mathrm{Gamma}(\nu/2,\nu/2),\label{eqn:repestui}  \quad i=1,\ldots,N.  
\end{align}

Observe that $\Y_i = \X_i\bbeta+ \Ae(\be_i^\top \,\, \bepsilon_i^\top)^\top$, with $\Ae=(\Z_i \,\, \mathbf{I}_{n_i})$. Using Proposition \ref{prop affine ST}, we have 
\begin{equation} \label{eqn dist complete data}
\Y_i  \sim \mathrm{ST}_{n_i,r}(\X_i\bbeta + b\Z_i \bDelta \mathbf{1}_r,\bPsi_i,\Z_i\bDelta,\nu),
\end{equation}
with 
$\bPsi_i=\Z_i\D \Z_i^\top+\bOmega_i.$
Hence, the marginal pdf of $\Y_i$ is
\begin{equation}\label{eqn:ydens}
\begin{aligned}
f(\y_i) ={} & 2^{r} \,\mathrm{t}_{n_i}\left(\y_i\mid \bmu_i,\bSigma_i,\nu\right)\times \\ &\mathrm{T}_r \left(\left.\bDelta^\top\Z_i^\top\bSigma_i^{-1}(\y_i-\bmu_i) \sqrt{\frac{\nu+n_i}{\nu+{d_i}(\y_i)}} \right| \mathbf{0}, \bLambda_i, \nu+n_i\right), \quad \ii, 
\end{aligned}
\end{equation}
where $\bmu_i = \X_i\bbeta + b\Z_i \bDelta \mathbf{1}_r$, $\bSigma_i = \bPsi_i + \Z_i\bDelta\bDelta^\top \Z_i^\top$, $\bLambda_i = \mathbf{I}_r - \bDelta^\top\Z_i^\top \bSigma_i^{-1} \Z_i\bDelta$, and $d_i(\y_i)=(\y_i-\bmu_i)^\top \bSigma_i^{-1} (\y_i-\bmu_i)$. 
Therefore, assuming that $\D = \D(\balpha)$, $\bDelta = \bDelta(\bdelta)$ and $\bOmega=\bOmega(\bphi)$ 
depend on unknown and reduced parameter vectors  $\balpha$, $\bdelta$ and $\bphi$, respectively, 
the log-likelihood function for $\btheta$ based on the observed sample
$\mathbf{y} = (\mathbf{y}^{\top}_1,\ldots,\mathbf{y}^{\top}_n)^{\top}$
is given by 
\begin{equation}\label{eq:loglik}
\ell(\btheta|\mathbf{y})=\sumas
\ell_i(\btheta|\mathbf{y})=\sumas
\log{f(\mathbf{y}_i|\btheta)},\end{equation}
where $\btheta=\left(\bbeta^{\top}, \bphi^{\top}, \balpha^{\top}, \bdelta^{\top}, \nu\right)^{\top}$.
Since the observed log-likelihood function involves complex expressions, it is very computationally expensive to work directly with $\ell(\btheta|\mathbf{y})$ to find the ML estimates of $\btheta$.
Hence, in the following subsection, we discuss the development of an EM-type algorithm \citep{Dempster77} for ML estimation.

\subsection{Maximum likelihood estimation}\label{sec:mlest}

From the hierarchical representation given in \eqref{eqn first re 1}--\eqref{eqn:repestui} and treating $\mathbf{b} = (\mathbf{b}^{\top}_1, \ldots, \mathbf{b}^{\top}_N)^{\top}$, $\mathbf{s} = (\mathbf{s}_1, \ldots, \mathbf{s}_N)^{\top}$ and $\mathbf{u} = (u_1, \ldots, u_N)^{\top}$ as hypothetical missing data, we propose to use the ECME algorithm \citep{Liu94} for parameter estimation.
Let the augmented data set be 
$\mathbf{y}_c = (\yp^{\top}, \be^{\top}, \mathbf{s}^{\top}, \mathbf{u}^{\top})^{\top}$, where
$\yp = (\yp^{\top}_1, \ldots, \yp^{\top}_N)^{\top}$, then the complete-data log-likelihood
function $\ell_c(\btheta|\mathbf{y}_c)=\sumas
\ell_i(\btheta|\mathbf{y}_c) $ is given by
\begin{eqnarray*}\label{logcompleta}
	\ell_c(\btheta|\mathbf{y}_c)&=&\sumas\left[-\frac{1}{2}%\log{|\sigma_e^2\textbf{R}_i|}
	\log{|\bOmega_i|}
	%-\frac{u_i}{2\sigma_e^2}
	-\frac{u_i}{2}(\yp_i-\mathbf{X}_i\bbeta
	-\mathbf{Z}_i\mathbf{b}_i)^{\top}\bOmega_i^{-1}
	(\yp_i-\mathbf{X}_i\bbeta-\mathbf{Z}_i\mathbf{b}_i)\right.\nonumber\\
	&&\left.-\frac{1}{2}\log{|\D|}-
	\frac{u_i}{2}(\mathbf{b}_i-\bDelta(b \mathbf{1}_r+ \s_i))^{\top}\D^{-1}(\mathbf{b}_i- \bDelta(b \mathbf{1}_r+ \s_i))\right]+K(\nu|\mathbf{u},\mathbf{s}),
\end{eqnarray*}
where $K(\nu|\mathbf{u},\mathbf{s})$ is a function that depends on the parameter vector $\btheta$ only through $\nu$.

%%%%
Given the current value $\btheta=\widehat{\btheta}^{(k)}$, the E-step of
an EM-type algorithm evaluates $\qhatk(\btheta) = \Ex\left\{\ell_c(\btheta|\mathbf{y}_c)\mid \widehat{\btheta}^{(k)}, \y\right\}=\sum^n_{i=1} \qhatk_i(\btheta)$,
where the expectation is taken with respect to the joint
conditional distribution of $\mathbf{b}$, $\mathbf{s}$, and $\mathbf{u}$, given $\mathbf{y}$ and $\widehat{\btheta}$. Therefore, we can write
$$\qhatk_i(\btheta)=\qhatk_{1i}(\bbeta,\bphi)+
\qhatk_{2i}(\balpha,\bdelta) + \qhatk_{3i}(\nu),$$ {where}
\begin{eqnarray*}
\qhatk_{1i}(\bbeta,\bphi) &=& -\frac{1}{2}\log|\bOmega_i| %-\frac{1}{2}\log \left|{\textbf{R}}_i\right|
-\frac{\widehat{u}_i^{(k)}}{2}\left(\y_{i}-\X_i{\bbeta}\right)^{\top}{\bOmega}_i^{-1}\left(\y_{i}-\X_i{\bbeta}\right)\nonumber\\&&+%\frac{1}{{\sigma}_e^{2}}
\left(\y_{i}-\X_i{\bbeta}\right)^{\top}{\bOmega}_i^{-1}\Z_i\widehat{\mathbf{ub}}_i^{(k)}
-\frac{1}{2}\textrm{tr}\left({\bOmega}_i^{-1}\Z_i\widehat{\mathbf{ub^2}}_i^{(k)}\Z^{\top}_i\right),\,\,\,\,
\\%\label{eq:Q1mat}\\
\qhatk_{2i}(\balpha,\bdelta) &=& -\frac{1}{2}\log{\left| {\D} \right|}-\frac{1}{2}\textrm{tr}\left({\D}^{-1}\widehat{\mathbf{ub^2}}_i^{(k)}\right)+
b \,\widehat{\mathbf{ub}}_i^{(k)\top}{\D}^{-1}\bDelta \umr+\textrm{tr}\left({\D}^{-1}\bDelta\,\widehat{\mathbf{ubs}}_i^{(k)\top}\right) 
\\&& -
b \,\widehat{\mathbf{us}}_i^{(k)\top}{\D}^{-1}\bDelta \umr -\frac{1}{2}\textrm{tr}\left(\bDelta^\top {\D}^{-1}\bDelta \widehat{\mathbf{us^2}}_i^{(k)}\right)
- \frac{\widehat{u}_i^{(k)}}{2} b^2 \umr^\top\bDelta^\top {\D}^{-1}\bDelta \umr,
\end{eqnarray*}
and {$\qhatk_{3i} = \Ex\left\{K(\nu|\mathbf{u},\mathbf{s})\mid \widehat{\btheta}^{(k)}, \y \right\}$}, with $\textrm{tr}(\A)$ indicating the trace of matrix $\A$, and %$\widehat{\textbf{R}}^{(k)}_i = \textbf{R}_i(\widehat{\bphi}^{(k)})$,  
$\widehat{u}_i^{(k)} = \Ex\{U_i\mid \widehat{\btheta}^{(k)},\y_i\}$,
$\widehat{\mathbf{u}\be}_i^{(k)} = \Ex\{U_i \be_i\mid \widehat{\btheta}^{(k)},\y_i\}$,
$\widehat{\mathbf{ub^2}}_i^{(k)} = \Ex\{U_i \be_i\be_i^\top\mid \widehat{\btheta}^{(k)},\y_i\}$,
$\widehat{\mathbf{ubs}}_i^{(k)} = \Ex\{U_i \be_i\bS_i^\top\mid \widehat{\btheta}^{(k)},\y_i\}$,
$\widehat{\mathbf{us}}_i^{(k)} = \Ex\{U_i \bS_i\mid \widehat{\btheta}^{(k)},\y_i\}$,
and $\widehat{\mathbf{us^2}}_i^{(k)} = \Ex\{U_i \bS_i\bS_i^\top\mid \widehat{\btheta}^{(k)},\y_i\}$,
$\ii$.

From the representation given in \eqref{eqn first re 1}--\eqref{eqn:repestui}, using properties from conditional expectation and after some algebra, omitting the supra-index $(k)$, the expressions above can be written as:

\begin{equation*}%\small
    \begin{array}{l}
    \widehat{u}_i = \dfrac{\widehat{\nu}+n_i}{\widehat{\nu}+\widehat{d}_i(\y_i)} \dfrac{\mathrm{T}_r\left(\widehat{\bq}_i(\y_i)\sqrt{({\widehat{\nu}+n_i+2})/({\widehat{\nu}+\widehat{d}_i(\y_i)})} \left| \mathbf{0},\, \widehat{\bLambda}_i,\,\widehat{\nu}+n_i+2\right.\right)}{\mathrm{T}_r\left(\widehat{\bq}_i(\y_i)\sqrt{({\widehat{\nu}+n_i})/({\widehat{\nu}+\widehat{d}_i(\y_i)})} \left| \mathbf{0},\, \widehat{\bLambda}_i,\,\widehat{\nu}+n_i\right.\right)},\\
    \widehat{\mathbf{us}}_i = \widehat{u}_i\,\Ex\{W_i\mid \widehat{\btheta}, \y_i\}, \quad
    \widehat{\mathbf{us^2}}_i = \widehat{u}_i\,\Ex\{W_iW_i^\top\mid \widehat{\btheta}, \y_i\}, \\ 
  \widehat{\mathbf{u}\be}_i = \widehat{\br}_i\ui + \widehat{\M}_i\widehat{\D}^{-1}\widehat{\bDelta}\, \widehat{\mathbf{us}}_i, 
  \quad
  \widehat{\mathbf{ubs}}_i = \widehat{\br}_i\,\widehat{\mathbf{us}}^\top_i + \widehat{\M}_i\widehat{\D}^{-1}\widehat{\bDelta}\, \widehat{\mathbf{us^2}}_i,
  \\
   \widehat{\mathbf{ub^2}}_i = \widehat{\M}_i + \widehat{\mathbf{ub}}_i \,\widehat{\br}_i^\top+ \widehat{\mathbf{ubs}}_i \,\widehat{\bDelta}^\top
   \widehat{\D}^{-1}\widehat{\M}_i,
    \end{array}
\end{equation*}
where $\bq_i(\y_i) = \bDelta^\top\Z_i^\top\bSigma_i^{-1}(\y_i-\bmu_i)$, $\br_i = b \bDelta\umr + \M_i \Z_i^\top \bOmega_i^{-1}(\y_i-\bmu_i)$, $\M_i = \left( \D^{-1} +\Z_i^\top \bOmega_i^{-1}\Z_i\right)^{-1},$
and $d_i(\y_i), \bSigma_i, \bmu_i$ and $\bLambda_i$ are as given in \eqref{eqn:ydens}. Moreover,
\begin{equation}%{\buildrel d \over =}
W_i\mid {\btheta}, \y_i \,\sim\, \mathrm{TT}_r\left({\bq}_i(\y_i), \, \frac{{{\nu}+{d}_i(\y_i)}}{{{\nu}+n_i+2}}{\bLambda}_i,\,{\nu}+n_i+2 ; \,\mathbb{R}^+\right),
\end{equation}
with $\mathrm{TT}_r(\bmu,\bSigma,\nu;\mathbb{A})$ denoting a Student-t distribution truncated on $\mathbb{A}$, and its moments can be computed using the \textsf{R} package \textbf{MomTrunc} \citep{galarza2020momtrunc}.

%%% conditional maximization
%At the M-step, the maximization of $\qhatk(\btheta)$ with respect to $\btheta$ is required. 
%The motivation for employing an EM-type algorithm is that it can be utilized efficiently to obtain closed-form equations for the M-step.
To maximize $\qhatk(\btheta)$ with respect to $\btheta$, the ECME algorithm performs a conditional maximization (CM) step that conditionally maximizes $\qhatk(\btheta)$, obtaining a new estimate  $\widehat{\btheta}^{(k+1)}$, as follows: %$\btheta=\left(\bbeta^{\top}, \bphi^{\top}, \balpha^{\top}, \bdelta^{\top}, \nu\right)^{\top}$

\noindent{\bf 1.} $\widehat{\bbeta}^{(k)}$, $\widehat{\bphi}^{(k)}$, $\widehat{\D}^{(k)}$, and $\widehat{\bDelta}^{(k)}$ are updated using the following expressions:
\begin{eqnarray*}
\widehat{\bbeta}^{(k+1)}&=&\left(\sumas\uki
\X_i^{\top}\widehat{\bOmega}^{{-1}}_i(\widehat{\bphi}^{(k)})\X_i\right)^{-1}\sumas
\X_i^{\top}\widehat{\bOmega}^{{-1}}_i(\widehat{\bphi}^{(k)})\left(\uki\yp_i-\Z_i
\widehat{\mathbf{u}\be}_i^{(k)}\right),\\
\widehat{\bphi}^{(k+1)}&=&\underaccent{\bphi}{\textrm{argmax}} \left\{\sumas\qhatk_{1i}(\widehat{\bbeta}^{(k+1)},\bphi)\right\},\\
\widehat{\D}^{(k+1)}&=& \frac{1}{N}\sumas \frac{\widehat{K}_{1i}^{(k)}+\widehat{K}_{1i}^{(k)\top}}{2},\\
\widehat{\bDelta}^{(k+1)}&=&%\displaystyle
\sumas \left(\widehat{b}^{(k)} \widehat{\mathbf{ub}}_i^{(k)}\umr^\top + \widehat{\mathbf{ubs}}_i^{(k)}\right) \left(\sumas  \widehat{\mathbf{us^2}}_i^{(k)} + \widehat{u}_i^{(k)}\widehat{b}^{2(k)} \umr \umr^\top +\widehat{b}^{(k)}(\umr\widehat{\mathbf{us}}_i^{(k)\top} + \widehat{\mathbf{us}}_i^{(k)}\umr^\top) \right)^{-1},\\
\end{eqnarray*}
where $ \widehat{b}^{(k)} = b( \widehat{\nu}^{(k)})$ \, and \, $\widehat{K}_{1i}^{(k)} =	\widehat{\mathbf{ub^2}}_i^{(k)}-
2 \widehat{\bDelta}^{(k)}	\widehat{\mathbf{ubs}}_i^{(k)\top} +
2 \widehat{b}^{(k)} \widehat{\bDelta}^{(k)} \umr \left(\widehat{\mathbf{us}}_i^{(k)\top}\widehat{\bDelta}^{{(k)\top}}- \widehat{\mathbf{ub}}_i^{(k)\top} \right) +\\
\widehat{\bDelta}^{(k)}\widehat{\mathbf{us^2}}_i^{(k)}\widehat{\bDelta}^{{(k)\top}} +
\widehat{u}_i^{(k)} \widehat{b}^{2(k)} \widehat{\bDelta}^{(k)}\umr \umr^\top \widehat{\bDelta}^{{(k)\top}}$.

\noindent{\bf 2.} $\widehat{\nu}^{(k)}$ is updated by
optimizing the constrained actual marginal log-likelihood
function $%\widehat{\nu}^{(k+1)}=\underaccent{\nu}{\textrm{argmax}} \{ 
\ell\left(\widehat{\btheta}^{*(k+1)}, \nu\left|\mathbf{y}\right.\right)$ given in \eqref{eq:loglik}, where {$\btheta^{*} = (\bbeta^{\top},  \bphi^{\top}, \balpha^{\top}, \bdelta^{\top})^{\top}$.} %\btheta \setminus \bnu$. 
For computational efficiency, this optimization is restricted to the case where $ \{\nu\in\mathbb{N}|\nu>1\}$.

The update of $\widehat{\bphi}^{(k)}$ depends on the specific structure considered for the scale matrix of the within-subject error term. If we restrict to the conditionally uncorrelated case, where $\bOmega_i = \sigma_e^2 \textbf{I}_{n_i}$, then $\bphi = \sigma_e^2$, and the update can be written as
\begin{eqnarray*}
\widehat{\sigma^2_e}^{{(k+1)}}=\frac{1}{\sumas n_i}\,\,\sumas
\left[\uki\left(\yp_i-\mathbf{X}_i\widehat{\bbeta}^{(k+1)}\right)^{\top}
\left(\yp_i-\mathbf{X}_i\widehat{\bbeta}^{(k+1)}\right) -\right.\\\left. 
2\left(\yp_i-\mathbf{X}_i\widehat{\bbeta}^{(k+1)}\right)^{\top}
\Z_i\widehat{\mathbf{u}\be}_i^{(k)} +
{\textrm{tr}}\left(\z_i\widehat{\mathbf{ub^2}}_i^{(k)} \z^{\top}_i\right)\right].
\end{eqnarray*}

The algorithm is iterated until a predefined criterion is reached, such as when $ \left|\dfrac{\ell(\widehat{\btheta}^{(k+1)}\mid \y)}{\ell(\widehat{\btheta}^{(k)}\mid \y)}-1\right|$ becomes smaller than a predefined value.

\subsection{Estimation of random effects and standard errors}\label{subsec:error}

To obtain an estimate of the random effects, we compute the minimum mean-squared error (MSE) estimator of $\mathbf{b}_i$,
that is given by the conditional mean of $\mathbf{b}_i$ given $\Y_i=\y_i$, as follows:
%, that can be shown to be
	\begin{eqnarray}
	\widehat{\be}_i(\btheta)&=&\Ex\{\mathbf{b}_i|\y_i,\btheta\}=\Ex_{U_i}\{\Ex_{\bS_i}\{\Ex_{\be_i}\{\mathbf{b}_i|\bs_i,u_i,\y_i,\btheta\}|u_i,\y_i,\btheta\}|\y_i,\btheta\}\nonumber\\\label{eq:estbi}
	 &=& b \bDelta\umr + \M_i \Z_i^\top \bOmega_i^{-1}(\y_i-\bmu_i) + 
	 \M_i \D^{-1} \bDelta\, \Ex\{W^*_i|{\btheta}, \y_i\},
	\end{eqnarray}
where
\begin{equation*}%{\buildrel d \over =}
W_i^*\mid {\btheta}, \y_i \,\sim\, \mathrm{TT}_r\left({\bq}_i(\y_i), \, \frac{{{\nu}+{d}_i(\y_i)}}{{{\nu}+n_i}}{\bLambda}_i,\,{\nu}+n_i ; \,\mathbb{R}^+\right).
\end{equation*}
In practice, the empirical Bayes estimator of $\mathbf{b}_i$ can be obtained by substituting the ML
estimate $\widehat{\btheta}$ into \eqref{eq:estbi}, that is, 
$\widehat{\mathbf{b}}_i = \widehat{\mathbf{b}}_i(\widehat{\btheta})$.

In addition, to obtain standard errors estimates, following \cite{matos2018multivariate} and assuming some general regularity conditions, we compute the empirical information matrix using the complete-data gradient vector with respect to $\btheta^*=\btheta\setminus\nu$, based on results of \cite{louis1982finding}. 
Evaluated at the EM estimate $\widehat{\btheta}$, the empirical information matrix is given by
\begin{equation}
    \textbf{I}_e(\widehat{\btheta}^*|\y) = \sumas \widehat{\bs}_i \,\widehat{\bs}_i^\top,  
\end{equation}
where $\widehat{\bs}_i = \bs \left(\y_i|\widehat{\btheta}^*\right)$, in which
$\bs (\y_i|{\btheta}^*)=\dfrac{\partial \log{f(\y_i|\btheta^*,\nu)}}{\partial \btheta^*}=
\Ex \left\{\dfrac{\partial \ell_i(\btheta^*,\nu|\y_{ic})}{\partial \btheta^*}| \y_i,\btheta^*,\nu \right\}$ and
$\ell_i(\btheta|\y_{ic})$ is the complete data log-likelihood from the $i$th observation vector $\y_i$, $\ii$.

Restricting to the conditionally uncorrelated case (where $\bphi = \sigma_e^2$), \ $\widehat{\bs}_i$ % = (\widehat{\bs}_{i,\beta},\widehat{\bs}_{i,\sigma^2_e}, \widehat{\bs}_{i,\alpha}, %upper.tri operator?
%\widehat{\bs}_{i,\delta} )^\top$ 
is a vector of dimension $p+1+q(q+1)/2+q\,r$ with the following components:
\begin{eqnarray*}
\widehat{\bs}_{i,\beta} &=&  \X_i^\top\widehat{\bOmega}^{{-1}}_i\left( \widehat{u}_i(\y_i - \X_i\widehat{\bbeta}) - \Z_i \widehat{\mathbf{u}\be}_i\right),\\
\widehat{\bs}_{i,\sigma^2_e}&=& -\frac{n_i}{2\widehat{\sigma}_e^2} + \frac{1}{2\widehat{\sigma}_e^4}\left[\widehat{u}_i\left(\yp_i-\mathbf{X}_i\widehat{\bbeta}\right)^{\top}
\left(\yp_i-\mathbf{X}_i\widehat{\bbeta}\right) - 
2\left(\yp_i-\mathbf{X}_i\widehat{\bbeta}\right)^{\top}
\Z_i\widehat{\mathbf{u}\be}_i +
{\textrm{tr}}\left(\z_i\widehat{\mathbf{ub^2}}_i \z^{\top}_i\right)\right], \\ 
\widehat{\bs}_{i,\alpha} &=& \text{upper.tri} \left( -\frac{1}{2}\widehat{\D}^{-1}+\frac{1}{2}\widehat{\D}^{-1}\frac{\widehat{K}_{1i}+\widehat{K}_{1i}^{\top}}{2} \widehat{\D}^{-1} \right),\\
\widehat{\bs}_{i,\delta}&=& \text{vec}\left(\widehat{\D}^{-1} \left(\widehat{b}\, \widehat{\mathbf{ub}}_i\umr^\top + \widehat{\mathbf{ubs}}_i\right) -\widehat{\D}^{-1} \widehat{\bDelta}\left(\widehat{\mathbf{us^2}}_i + \widehat{u}_i\widehat{b}^{2} \umr \umr^\top +\widehat{b}(\umr\widehat{\mathbf{us}}_i^{\top} + \widehat{\mathbf{us}}_i\umr^\top)\right) \right),
\end{eqnarray*}
where the expectations $\widehat{u}_i, \widehat{\mathbf{u}\be}_i, \widehat{\mathbf{ub^2}}_i, \widehat{\mathbf{ubs}}_i, \widehat{\mathbf{us}}_i$ and $\widehat{\mathbf{us^2}}_i$ are computed as in 
Section \ref{sec:mlest}, and the operator upper.tri$(\A)$ extracts and vectorizes the elements of the upper triangular part of matrix $\A$ (including its diagonal), and vec$(A)$ stacks the columns of matrix $\A$.

\section{Simulation studies}\label{sec:sim}

To exemplify the flexibility of the proposed model and investigate its empirical properties, this section presents an illustrative example and two simulation studies.

\subsection{Illustrative study}
This study illustrates the flexibility of the ST distribution discussed in this work by generating only one sample from 
$$\be_i = (b_{0i},b_{1i})^\top \iid \mathrm{ST}_{2,2}\left(
 b \bDelta \mathbf{1}_r, 
\left(\begin{array}{rr}
      0.5&-0.2\\ -0.2&0.5
\end{array}\right), 
\bDelta, 10
\right), \quad i=1,\hdots, 200,$$
with $b \approx -0.949$, and the following scenarios are considered for $\bDelta$:
$$(a)\, \left(\begin{array}{rr}
      0.6&1.5\\-1.0&3.0 
\end{array}\right);
\,\,\,
(b)\, \left(\begin{array}{rr}
      1.7&0.7\\3.9&-0.8 
\end{array}\right);
\,\,\,
(c)\, \left(\begin{array}{rr}
      2.0&0.0\\0.0&-2.0 
\end{array}\right);
\,\,\, \text{ and } \,\,\,
(d)\, \left(\begin{array}{rr}
      2.0&0.0\\-2.0&0.0 
\end{array}\right).
$$
It is noteworthy that scenario $(c)$ is the particular case of the SDB-ST distribution proposed by \cite{Sahu_Dey_Marcia}, and scenario $(d)$ is equivalent to the case of $r=1$ considered in \cite{schumacher2021scale}, for example.

\begin{figure}[ht]
    \centering
    \includegraphics[width=1\textwidth]{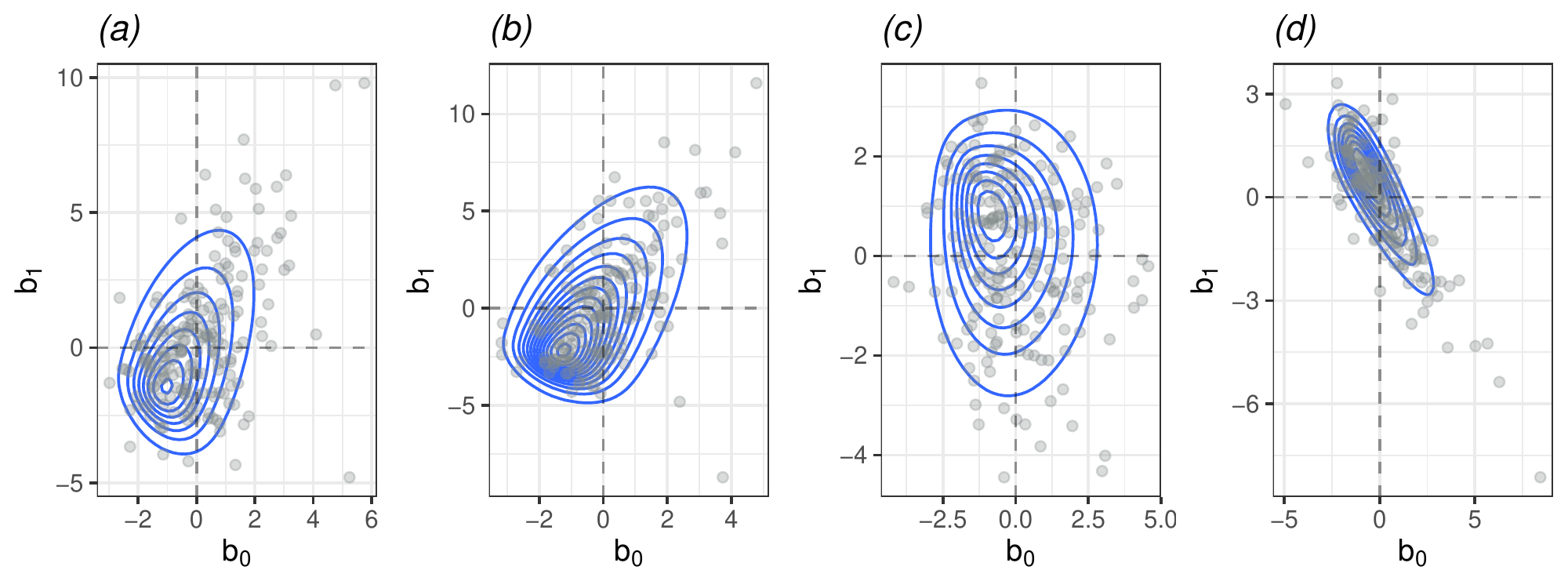}
    \caption{Illustrative study. Contour plots of the theoretical distribution of the random effects (blue curves), and simulated random effects (gray dots), for each scenario.}
    \label{fig:sim0}
\end{figure}

Figure \ref{fig:sim0} presents contour plots of the theoretical distribution of the random effects for each scenario, superimposed with generated values of the random effects. 
The great flexibility of the more general formulation of the ST distribution can be observed since its density function assumes various shapes.%, including the important special cases commonly used in the literature.

\subsection{First study} %sim2 
In order to evaluate the empirical properties of the proposed model, we generated and estimated 500 Monte Carlo samples from the model
\begin{equation*}
    \Y_i = (\beta_0 + b_{0i})\mathbf{1}_{n} + (\beta_1 + b_{1i}){\bf x} + \bepsilon_i, \quad  i=1,\ldots,N, 
\end{equation*}
where $\beta_0=1$, $\beta_1=3$, $n = 5$, ${\bf x} = (-1, -0.5, 0, 0.5, 1)^\top$ and $N$ taking values $100, 200, 400$, and $600$, with $\bepsilon_{i} \ind \mathrm{t}_{5}(\mathbf{0},0.25 \mathbf{I}_{5},5)$ and 
$$\be_i = (b_{0i},b_{1i})^\top \iid \mathrm{ST}_{2,2}\left(
\left(\begin{array}{c}
       -1.993\\-1.898
\end{array}\right), 
\left(\begin{array}{rr}
      0.5&-0.2\\ -0.2&0.5
\end{array}\right), 
\left(\begin{array}{rr}
      0.6&-1.0\\1.5&3.0 
\end{array}\right), 5
\right).$$

The ML estimates and their associated SEs were recorded. For the sake of comparison, a numerical approximation of the Hessian matrix of the marginal likelihood function in \eqref{eq:loglik} using the function \verb|Hessian()| from the \textsf{R} package \textbf{numDeriv} \citep{rpackage_numderiv} was also recorded. In order to evaluate the consistency of the standard error estimation described in Subsection \ref{subsec:error}, we computed for each sample size the standard error of the ML estimates obtained from the 500 Monte Carlo samples (MC-SD), the mean of the standard error obtained as the diagonal of the inverse of the negative numerical Hessian (SE-N), and the mean of the standard error obtained using Louis' method, as presented in Subsection \ref{subsec:error} (SE-L). 

\begin{table}[ht]
\caption{Simulation study 1. Results based on 500 Monte Carlo samples for different numbers of subjects ($N$). MC-AV and MC-SD refer to the mean and standard deviation of the
estimates, respectively. SE-L and SE-N denote the average of standard errors obtained using Louis' and numerical methods.}\label{tabSim1}
\small \centering
\begin{tabular}{@{}crrrrrrrrrrr@{}}
\toprule
 & $\beta_0$ & $\beta_1$ & $\sigma^2$ & $\D_{11}$ & $\D_{12}$ & $\D_{22}$ & $\bDelta_{11}$ & $\bDelta_{21}$ & $\bDelta_{12}$ & $\bDelta_{22}$ & $\nu$ \\ \midrule
True & 1.000 & 3.000 & 0.250 & 0.500 & -0.200 & 0.500 & 0.600 & -1.000 & 1.500 & 3.000 & 5.000 \\
\midrule \multicolumn{12}{c}{$N$ = 200} \\\midrule
MC-AV & 0.992 & 2.998 & 0.252 & 0.470 & -0.200 & 0.555 & 0.627 & -0.918 & 1.503 & 2.969 & 5.398 \\
MC-SD & 0.078 & 0.114 & 0.020 & 0.135 & 0.129 & 0.246 & 0.258 & 0.292 & 0.169 & 0.221 & 0.854 \\
SE-L & 0.101	&0.164	&0.019	&0.155	&0.326	&0.475	&0.301	&0.473&	0.169&	0.257\\
%0.090 & 0.135 & 0.019 & 0.134&0.282&0.369&0.114&0.113&0.130&0.096&  \\
SE-N & 0.100 & 0.162 & 0.019 &0.149&0.142&0.516&0.298&0.463&0.171&0.262&  \\
\midrule\multicolumn{12}{c}{$N$ = 600} \\\midrule
MC-AV & 0.995 & 2.999 & 0.253 & 0.491 & -0.198 & 0.511 & 0.612 & -0.989 & 1.513 & 3.006 & 5.246 \\
MC-SD & 0.041 & 0.055 & 0.012 & 0.076 & 0.065 & 0.100 & 0.126 & 0.126 & 0.077 & 0.109 & 0.492 \\

SE-L & 0.058	&0.095	&0.011	&0.087&	0.176	&0.275	&0.162	&0.250	&0.094	&0.146\\
%0.052 & 0.077 & 0.011 &0.076	&0.155&0.208&0.063&0.051&0.076&0.048	&  \\
SE-N & 0.058 & 0.094 & 0.011 &0.087	&0.079&0.288&0.172&0.262&0.097&0.146&  \\ \bottomrule
\end{tabular}
\end{table}

Table \ref{tabSim1} presents the standard error estimates for $N=200$ and $N=600$, in addition to the average of the ML estimates, denoted by MC-AV, and the parameter values used in data generation (True). 
In general, the point estimation seems close to the parameter value, and standard error estimates obtained using Louis' method seem reasonable for $\widehat{\bbeta}$ but can be misleading for parameters related to the random effects.
The estimates obtained from the numerical Hessian do not seem to improve the general accuracy, and it is  worth noting that they are based respectively on $458$ and $493$ samples for $N=200$ and $N=600$, as some samples resulted in numerical errors. Furthermore, the computational cost from the numerical method is much higher than the one from Louis' method.

\begin{figure}[ht]\centering
    \centering
    \includegraphics[width=.9\textwidth]{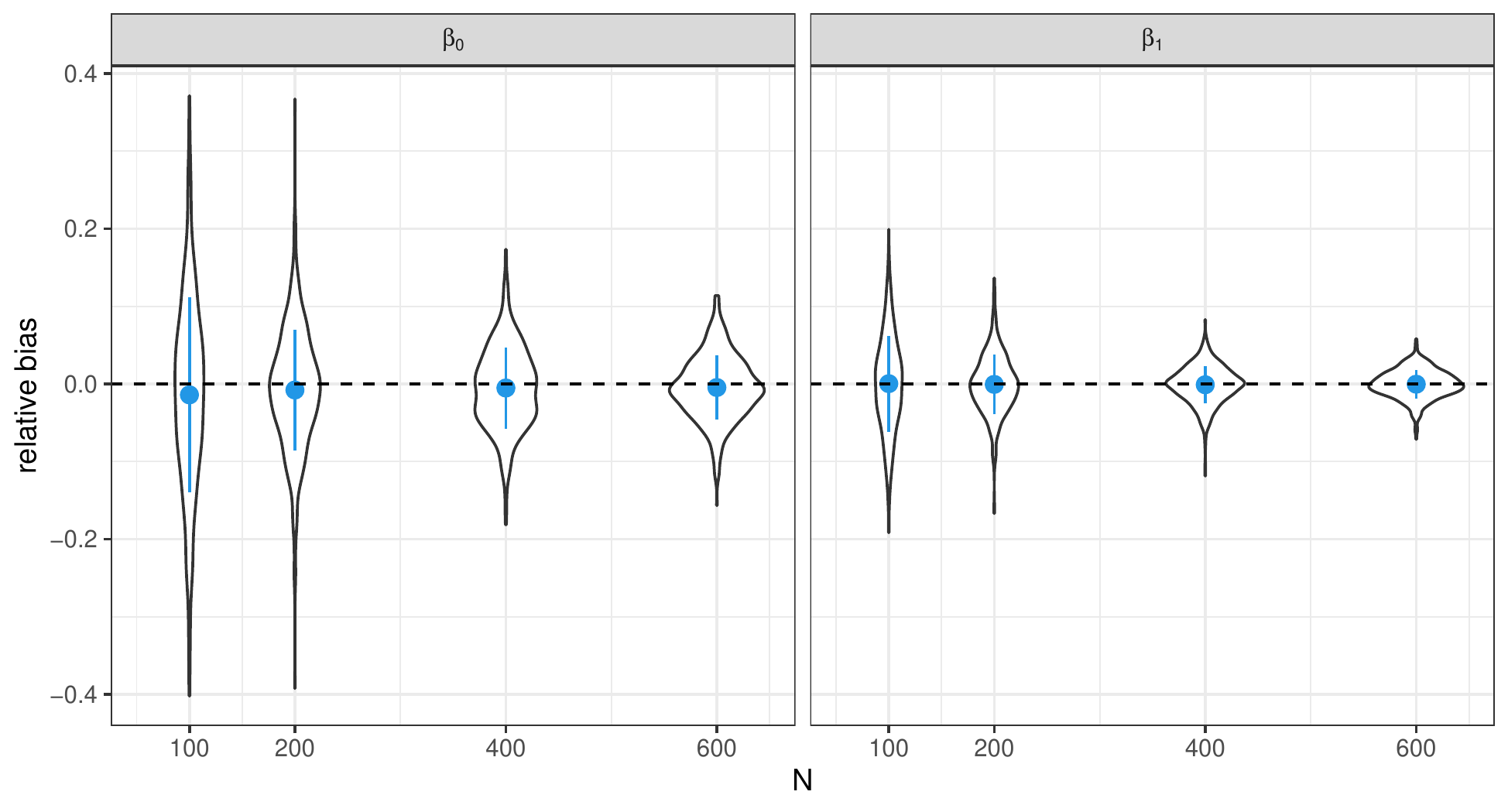}
    \caption{Simulation study 1. Violin plot of the relative bias of $\hat{\beta}_0$, and $\hat{\beta}_1$ for different numbers of subjects ($N$). Blue dots and lines indicate the mean and $\pm 1$ SD, respectively.}
    \label{fig:sim2assint}
\end{figure}

The consistency of the ML estimator of $\widehat{\bbeta}$ is illustrated in Figure \ref{fig:sim2assint}, which presents the mean relative bias and $\pm 1$ SD by the number of subjects ($N$). It can be seen that as the number of subjects increases, the bias (when it exists) draws closer to zero, and its SD decreases.

\subsection{Second study}
Aiming to evaluate the effect of different initial values, we 
considered 500 Monte Carlo samples from a model similar to the one estimated in the real data application presented in Section \ref{sec:applic}:
\begin{equation*}
    \Y_i = (\beta_0 + b_{0i})\mathbf{1}_{n} + (\beta_1 + b_{1i}){\bf x} + \beta_2\,{\bf x}^2 + \bepsilon_i, \quad  i=1,\ldots,200, 
\end{equation*}
where $\beta_0=2.7$, $\beta_1=-1$, $\beta_2=6.8$, $n = 5$, ${\bf x} = (-0.3, -0.15, 0, 0.15, 0.3)^\top$, and ${\bf x}^2 = ({x}_1^2,\hdots, {x}_5^2)^\top$ with $\bepsilon_{i} \ind \mathrm{t}_{5}(\mathbf{0},0.21 \mathbf{I}_{5},5)$ and 
$$\be_i = (b_{0i},b_{1i})^\top \iid \mathrm{ST}_{2,2}\left(
\left(\begin{array}{c}
       -2.278\\-2.942
\end{array}\right), 
\left(\begin{array}{rr}
      0.1&-0.1\\ -0.1&0.5
\end{array}\right), 
\left(\begin{array}{rr}
      1.7&0.7\\3.9&-0.8 
\end{array}\right), 5
\right).$$

For all scenarios, the estimation procedure was initialized with $\nu = 10$. The remaining parameters were initialized considering several approaches, as follows: %the following initial values were considered
\begin{itemize}\setlength{\itemsep}{0em}
    \item[(a)] The true parameter values plus a small normally generated error;
    \item[(b)] For $\bbeta, \sigma^2$ and $\D$, the estimated values from the normal LMM (obtained through the \texttt{lme()} function from \textbf{nlme} package in \textsf{R}); for $\bDelta$, the value that maximizes the marginal log-likelihood function given in \eqref{eq:loglik} on a small grid of $\bDelta$ and for other parameters fixed. 
    \item[(c)] Using the procedure described in (b) to estimate an SN-LMM (using up to 100 iterations), and then using the SN-LMM estimates as initial values for the ST-LMM;
    \item[(d)]  Using the procedure described in (b) to obtain initial values for $\bbeta$ and $\sigma^2$, and the procedure described in (c) for $\D$ and $\bDelta$;
    \item[(e)] Fitting the model using (b), (c), and (d), and then using as final estimate the fit that presents the highest likelihood value.
\end{itemize}
The proposal in (a) is impractical in real applications as the true parameter values are unavailable, but this scenario was considered for comparison purposes. 

\begin{figure}[ht]\centering
    \centering
    \includegraphics[width=1\textwidth]{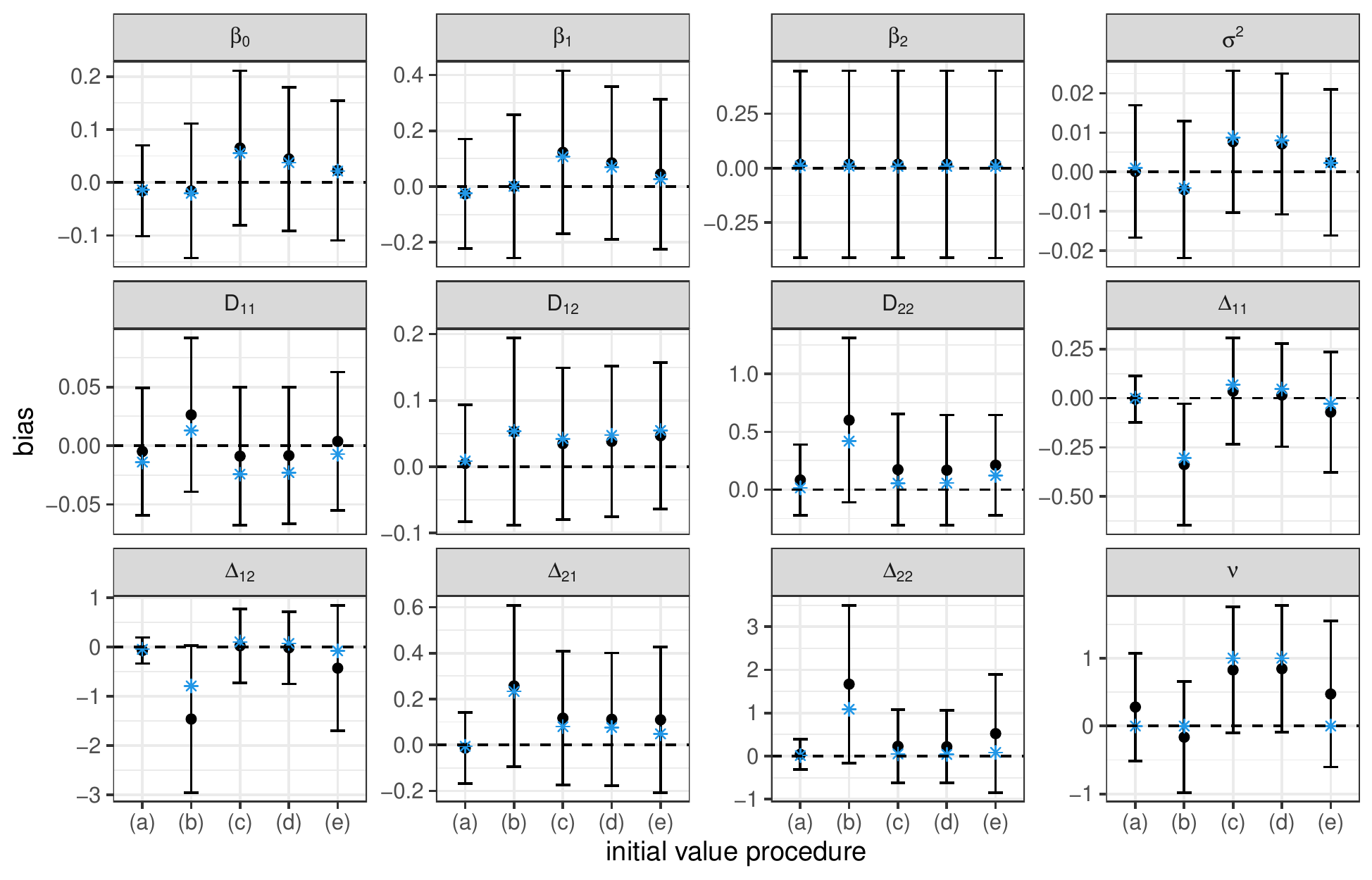}
    \caption{Simulation study 2. Mean bias $\pm 1$ SD of estimates for different initial value procedures. The blue asterisk indicates the median bias.}
    \label{fig:sim3initial}
\end{figure}

Figure \ref{fig:sim3initial} presents some summary information regarding the bias for all parameters and all initial value schemes. It can be seen that the method described in (b) is closer to (a) for estimating $\bbeta$, but is quite poor in estimating $\D$ and $\bDelta$. On the other hand, (c) and (d) are closer to (a) for estimating $\D$ and $\bDelta$, but seem to be biased to estimate $\bbeta$. Finally, (e) seems to perform satisfactorily to estimate all parameters, and therefore we consider this procedure in the practical application presented in the next section. 
Furthermore, it is worth mentioning that from the 500 Monte Carlo samples, method (e) selected (b) in 191 samples, (c) in 51 samples, and (d) in 258 samples.

\section{Application: schizophrenia data}\label{sec:applic}
Schizophrenia is a severe psychiatric disorder, and the equivalence of a new antipsychotic drug in comparison to a standard drug for this disorder was studied by \cite{lapierre1990controlled} using a double-blinded clinical trial  with randomization among four treatments: three doses (low, medium, and high) of a new therapy (NT) and a standard therapy (ST), for 245 patients with acute schizophrenia. 
The study was conducted at 13 clinical centers, and the primary response variable was assessed at the baseline (week 0) and weeks 1, 2, 3, 4, and 6 of treatment using the Brief Psychiatric Rating Scale (BPRS), which measures the extent of 18 features and rates each one on a seven-point scale, in which a higher number reflects a worse evaluation. The total BPRS score is the sum of the scores on the 18 items.

\begin{figure}\centering
    \centering
    \includegraphics[width=.9\textwidth]{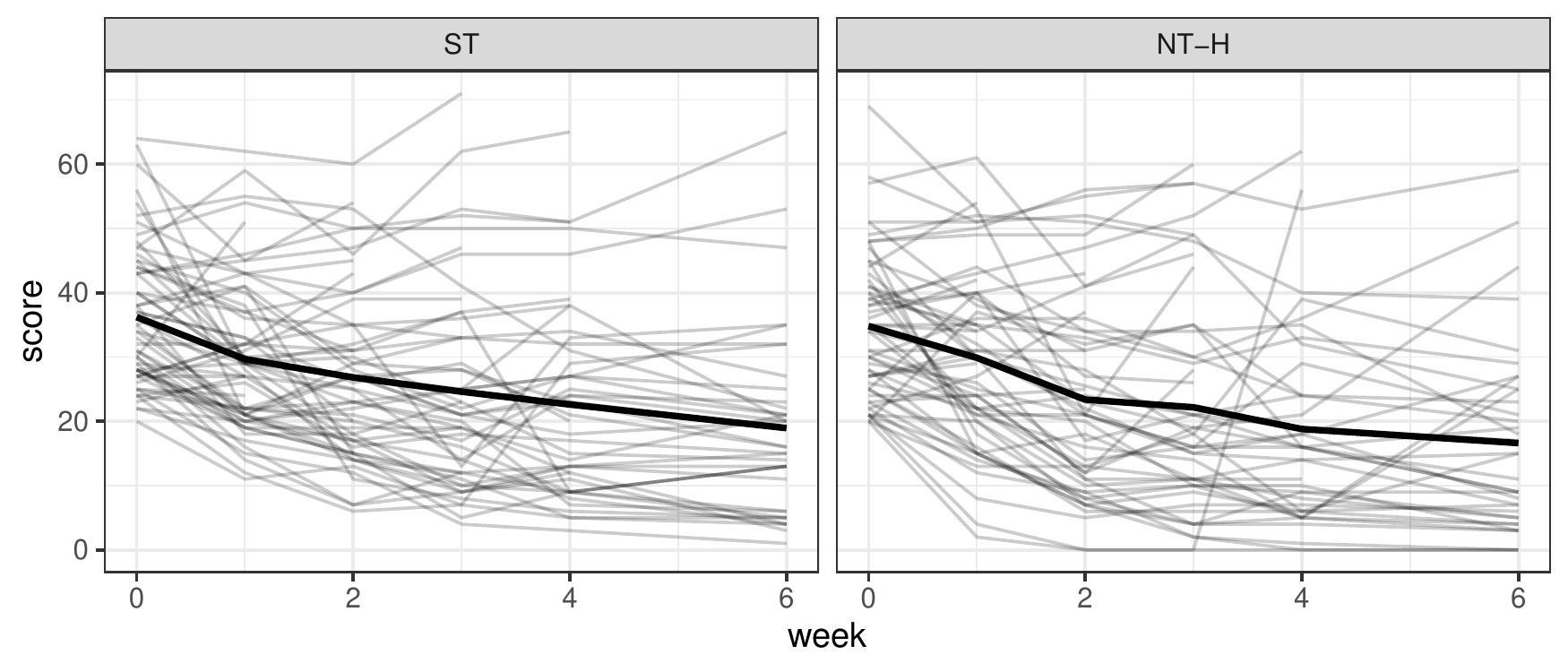}
    \caption{Trajectories of schizophrenia levels for the data. The thicker solid line indicates
the mean profile in the treatment.} 
\label{fig:schizo}

\end{figure}

Initial studies prior to this double-blinded study suggested that the experimental drug had equivalent antipsychotic activity, with fewer side effects. For the sake of simplicity, we will consider only the ST and the high dose of the NT, where each patient has at least two BPRS scores (118 patients), but an extension for modeling all treatments is direct. Individual BPRS trajectories evolved over up to six visits, as well as their mean profiles, are shown in Figure \ref{fig:schizo}, where it can be seen that several patients did not complete the study follow-up and that the mean profiles exhibit an apparent non-linear decline.
Furthermore, previous studies, such as \cite{ho2010robust} and \cite{schumacher2021scale}, showed that both subject-specific intercepts and slopes are positively skewed and that the data present heavy tails, indicating the need for a robust model that accommodates the random effect skewness.

In this section, we revisit this data set to allow for a more flexible skewness structure.
Based on the trajectories presented in Figure \ref{fig:schizo} and aiming to evaluate the treatment effect over time, we propose to fit the model 
$$ \Y_{i} = (\beta_0+ b_{0i})\mathbf{1}_{n_i} + (\beta_1+b_{1i}) {\bf x}_i+ \beta_2 {\bf x}_i^2 +\beta_3\,\textbf{NT}_i+\beta_4\,{\bf x}_i\,\textbf{NT}_i+ \bepsilon_{i}, \,\, i=1,\hdots , 118,$$
where $\Y_{i}$ is the $i$th participant total BRPS score vector divided by 10, $\mathbf{1}_{n_i}$ is the all-ones vector of length $n_i$, ${\bf x}_i=(x_{i1},\hdots,x_{in_i})^\top$, with $x_{ij}$ taken as (time - 3)/10 and time being measured in weeks from the baseline, ${\bf x}^2_i = (x_{i1}^2,\hdots,x_{in_i}^2)^\top$, and $\textbf{NT}_i$ is an all-ones vector if the $i$th subject received the new therapy and an all-zeros vector otherwise (an indicator vector of receiving NT).

\begin{table}[ht]
\centering
\caption{Selection criteria for fitting the SN-LMM and ST-LMM to the schizophrenia data set.}\label{tab:schizo1}
\begin{tabular}{@{}c|rrr|rrr@{}}
\toprule
criterion & \multicolumn{1}{c}{SN (r=1)} & \multicolumn{1}{c}{SN (SDB)} & \multicolumn{1}{c|}{SN (r=2)} & \multicolumn{1}{c}{ST (r=1)} & \multicolumn{1}{c}{ST (SDB)} & \multicolumn{1}{c}{ST (r=2)} \\ \midrule
npar & 11 & 11 & 13 & 12 & 12 & 14 \\
loglik & -778.56&	-791.63&	-777.72&	-737.92&	-754.28&	-735.51\\
AIC & 1579.12& 1605.26&	1581.44&	1499.84&	1532.56& \bf	1499.02\\ \bottomrule
\end{tabular}
\end{table}

Table \ref{tab:schizo1} presents the number of estimated parameters (npar), the maximum log-likelihood value attained (loglik), and the Akaike information criterion (AIC) for SN and ST distributions with $r=1$, $r=2$, and the particular case of SDB. 
The lowest AIC value is the one from the ST-LMM with $r=2$, closely followed by the ST model with $r=1$. %and therefore this model is selected for further analyses. 
Figure \ref{fig:schizoFits} presents contour plots of the estimated distribution of the random  effects and empirical Bayes estimates of the random effects for all the ST-LMM models considered. The model with $r=2$ seems to fit the data slightly better, but the fit from the model with $r=1$ seems reasonably similar. 

\begin{figure}[ht]\centering
    \centering
    \includegraphics[width=1\textwidth]{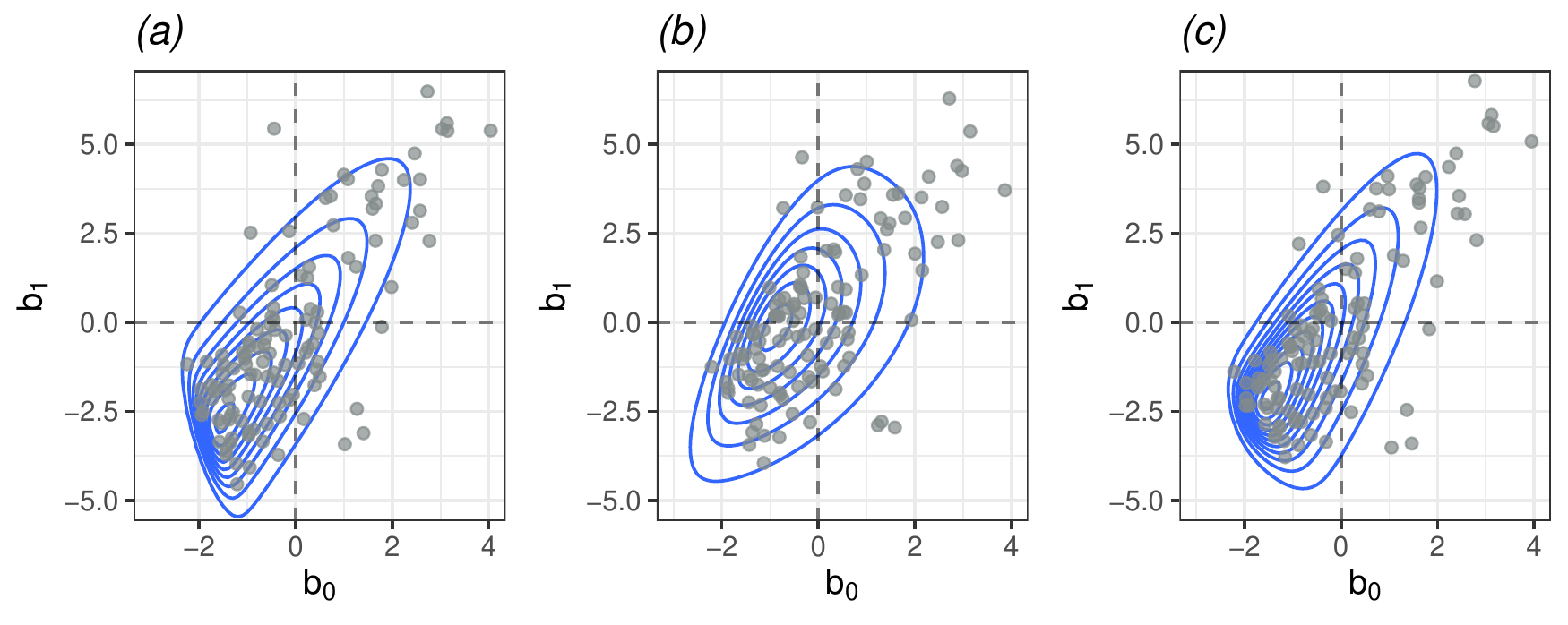}
    \caption{Contour plots of the estimated distribution of the random effects (blue curves), and empirical Bayes estimates of the random effects (gray dots), for the ST-LMM models fitted to the schizophrenia data set, with $r=1$ {\it (a)}, SDB {\it (b)}, and $r=2$ {\it (c)}.} 
\label{fig:schizoFits}
\end{figure}

For the sake of comparison, Table \ref{tab:schizo2} summarizes the results from ML estimation of the models with $r=1$ and $r=2$. %Moreover, 95\% confidence intervals were calculated for $\bbeta$ by considering the asymptotic normal approximation for the distribution of ML estimators.
Furthermore, analyzing the ST model with $r=2$ using the asymptotic normal approximation for the distribution of ML estimators with $\alpha=0.05$, we conclude that all fixed effects are significant, except the ones associated with the treatment effect ($\beta_3$ and $\beta_4$), corroborating with the equivalence hypothesis of the new antipsychotic drug.

\begin{table}[ht]
\centering
\caption{ML results from fitting the ST-LMM to the schizophrenia data set.}\label{tab:schizo2}
\begin{tabular}{@{}c|rr|rr@{}}
\toprule
\multirow{2}{*}{Parameter} & \multicolumn{2}{c|}{ST (r=1)} & \multicolumn{2}{c}{ST (r=2)} \\ \cmidrule(l){2-5} 
 & \multicolumn{1}{c}{Estimate} & \multicolumn{1}{c|}{SE} & \multicolumn{1}{c}{Estimate} & \multicolumn{1}{c}{SE} \\ \midrule
 $\beta_0$ & 2.668 & 0.136 & 2.658 & 0.142 \\
$\beta_1$ & -0.971 & 0.378 & -1.316 & 0.352 \\
$\beta_2$ & 6.722 & 0.507 & 6.712 & 0.514 \\
$\beta_3$ & -0.210 & 0.152 & -0.247 & 0.155 \\
$\beta_4$ & -0.267 & 0.468 & -0.099 & 0.444 \\
$\sigma^2$ & 0.213 &  & 0.199 &  \\
$\D_{11}$ & 0.098 &  & 0.061 &  \\
$\D_{12}$ & -0.304 &  & -0.092 &  \\
$\D_{22}$ & 1.431 &  & 0.560 &  \\
$\bDelta_{11}$ & 1.966 &  & 1.562 &  \\
$\bDelta_{21}$ & 3.544 &  & 3.324 &  \\
$\bDelta_{12}$ & - &  & 0.586 &  \\
$\bDelta_{22}$ & - &  & -0.852 &  \\
$\nu$ & 5.000 &  & 4.000 & \\
\bottomrule
\end{tabular}
\end{table}

\section{Final remarks}\label{sec:final}

This work developed a robust approach to relax the mathematical convenient normal assumptions usually considered in LMM, by considering a flexible formulation of the ST distribution that has as particular cases the proposals of \cite{azzalini2003distributions} and \cite{Sahu_Dey_Marcia}. The codes developed for estimation of the ST-LMM are available for download at the GitHub repository \url{https://github.com/fernandalschumacher/cfstlmm}.

Even though the general model formulation considered in this work ensures that the random effect and error are uncorrelated, they are not independent in general. In this regard, an interesting extension would be to consider different mixing variables for the random effect and the error, as in \cite{asar2018linear}. Nevertheless, in this case, the likelihood function has no closed-form, and therefore the use of approximated approaches, such as a Monte Carlo EM algorithm, is necessary.
Additionally, the proposed formulation can be easily extended to accommodate within-subject serial dependence, by considering useful structures such as damped exponential correlation \citep[DEC,][]{munoz1992parametric} or autoregressive correlation of order $p$  \citep[AR($p$),][]{box1976time}.

\section*{Acknowledgements}
This study was financed in part by the Coordenação de Aperfeiçoamento de Pessoal de Nível Superior - Brasil (CAPES) - Finance Code 001,  by the Conselho Nacional de Desenvolvimento Científico e Tecnológico - Brasil (CNPq), and by the Universidade Federal do Amazonas (UFAM).

\end{document}